\newcommand{\annref}[2][]{\ref{#2}#1} 
\setlist{itemsep=0pt,parsep=0pt}             
\newcommand{\remove}[1]{}
\newcommand{\Work}{\textsf{Work}}
\newcommand{\procName}[1]{\mbox{\normalfont\texttt{#1}}}
\newcommand{\nul}[0]{\mbox{\normalfont\textbf{null}}}
\newcommand{\parent}{\mbox{\normalfont\texttt{parent}}}
\newcommand{\find}{\mbox{\normalfont\texttt{find}}}
\newcommand{\union}{\mbox{\normalfont\texttt{union}}}
\newcommand{\myparagraph}[1]{\smallskip\noindent\textbf{#1}}
\newcommand{\note}[2]{
    \ifthenelse{\equal{\showComments}{yes}}{\textcolor{#1}{#2}}{}
}
\title{Work-Efficient Parallel and Incremental Graph Connectivity}
\author{
Natcha Simsiri\thanks{College of Information and Computer Sciences, University of Massachusetts-Amherst, {\tt nsimsiri@umass.edu}}
\and
Kanat Tangwongsan\thanks{Computer Science Program, Mahidol University International College, {\tt kanat.tan@mahidol.edu}}
\and
Srikanta Tirthapura\thanks{Department of Electrical and Computer Engineering, Iowa State University, {\tt snt@iastate.edu}}
\and
Kun-Lung Wu\thanks{IBM T.J. Watson Research Center {\tt klwu@us.ibm.com}}
}
\begin{document}

\maketitle

\begin{abstract}
On an evolving graph that is continuously updated by a high-velocity stream of edges, how can one efficiently maintain if two vertices are connected? This is the connectivity problem, a fundamental and widely studied problem on graphs. We present the first shared-memory parallel algorithm for incremental graph connectivity that is both provably work-efficient and has polylogarithmic parallel depth. We also present a simpler algorithm with slightly worse theoretical properties, but which is easier to implement and has good practical performance.  Our experiments show a throughput of hundreds of millions of edges per second on a $20$-core machine.
%
\end{abstract}



\section{Introduction}
\label{sec:intro}
Graph connectivity is a fundamental problem with a long history. On an
undirected graph, the basic connectivity question is: \emph{given two vertices,
is there a path between them?}

Our work is motivated by the need for high throughput real-time streaming graph
analytics.  Every minute, a staggering amount of high-velocity linked data is
being generated from social media interactions, the Internet of Things (IoT)
devices, among others---and timely insights from them are much sought after.
These data are usually cast as a stream of edges with the goal of maintaining
certain local and global properties on the accumulated data.  Modern stream
processing systems such as IBM Infosphere Streams~\cite{IBM-streams} and Apache
Spark~\cite{ZDL+13} rely on parallel processing of input streams to achieve high
throughput and real-time analytics. However, these systems only provide the
software infrastructure; scalable, parallel, and dynamic graph algorithms are
still needed to make use of the potential of these systems.

As a first step towards efficient parallel and dynamic graph algorithms, we
consider the parallel \emph{incremental} graph connectivity problem in a setting
where edges and queries arrive in bulk. Tackling the parallel incremental version of the 
problem, which allows only addition of edges to the graph, is an important stepping
stone towards the more general problem of (fully) {\em dynamic} connectivity
that allows both addition and deletion of edges.

There exist sequential algorithms for incremental graph connectivity, starting
from the popular union-find data structure~\cite{T75}; but these are, for the
most part, unable to take advantage of parallelism. There exist parallel
algorithms for graph connectivity~(e.g., \cite{SDB14,Gazit91}), but these are,
for the most part, not incremental. None of these meet the need for high
throughput dynamic graph processing.

In order to make effective use of parallelism in stream processing, systems such
as Apache Spark~\cite{ZDL+13} use a model of ``discretized streams'',
where the incoming high-volume stream
is divided into a sequence of ``minibatches''. Each minibatch is processed using a parallel
computation, and the resulting system can potentially achieve a very high throughput, subject
to the availability of appropriate algorithms. We adopt this model in our work and seek parallel 
methods that can process a minibatch of edges efficiently.


\myparagraph{Model:} On a vertex set $V$, a graph stream $\mathcal{A}$ is a sequence of minibatches $A_1, A_2, \dots$, where each minibatch $A_i$ is a set of edges on $V$. The graph at the end of observing $A_t$, denoted by $G_t$, is $G_t = (V, \cup_{i=1}^t A_i)$ containing all the edges up to $t$. The minibatches $A_i$ need not be of equal sizes.

In this paper, we study a \emph{bulk-parallel incremental connectivity problem}, 
which is to maintain a data structure that provides two operations:
\procName{Bulk-Update} and \procName{Bulk-Query}.  The \procName{Bulk-Update}
operation takes as input a minibatch of edges $A_i$ and adds them to the graph. The
\procName{Bulk-Query} operation takes a minibatch of vertex-pair queries and returns
for each query, whether the two vertices are connected on the edges observed so
far in the stream. On this data structure, the \procName{Bulk-Query} and
\procName{Bulk-Update} operations are each invoked with a (potentially large)
minibatch of input, each processed using a parallel computation. But a bulk
operation, say a \procName{Bulk-Update}, must complete before the next
operation, say a \procName{Bulk-Query}, can begin.

\myparagraph{Contributions:} We present the first shared-memory
parallel algorithm for incremental connectivity 
that is both provably work-efficient and has polylogarithmic parallel
depth.  We make the following specific contributions:
\begin{description}[itemsep=2pt,leftmargin=0pt,parsep=0pt,topsep=2pt]
\item[---]\emph{Simple Parallel Incremental Connectivity.}  We first present a
  simple algorithm that is easy to implement, yet has good theoretical
  properties. On a graph with $n$ vertices, this algorithm makes a single pass
  through the stream using $O(n)$ memory, and can process a minibatch of $b$
  edges, using $O(b \log n)$ work and $O(\polylog(n))$ parallel depth.  We
  describe this algorithm in Section~\ref{sec:simple-algo}.

\item[---]\emph{Work-Efficient Parallel Incremental Connectivity.} We present an
  improved parallel algorithm with total work $O((m+q)\alpha(m+q, n))$ where $m$
  is the total number of edges across all minibatches, $q$ is the total number
  of connectivity queries across all minibatches, and $\alpha$ is an inverse
  Ackermann's function (see Section~\ref{sec:related}). This matches the work of
  the best sequential counterpart, which makes this parallel algorithm {\em
    work-efficient}.  Further, the parallel depth of processing a minibatch is
  polylogarithmic. Hence, the sequential bottleneck in the runtime of the
  parallel algorithm is very small, and the algorithm is capable of using almost
  a linear number of processors efficiently. We are not aware of a prior
  parallel algorithm with such provable properties on work and depth. We
  describe this algorithm in Section~\ref{sec:work-efficient}.

\item[---]\emph{Implementation and Evaluation.} We implemented and benchmarked
  a variation of our simple parallel algorithm on a shared-memory machine. 
  Our experimental results show that the algorithm
  achieves good speedups in practice and is able to efficiently use the
  available parallelism.  On a 20-core machine, 
  it can process hundreds of millions of edges per second, and
  realize a speedup of $8$--$11$x over its single threaded performance. Further
  analysis shows good scalability properties as the number of threads is
  varied. We describe this in Section~\ref{sec:eval}.
\end{description}


\section{Related Work}
\label{sec:related}

Let $n$ be the number of vertices, $m$ the number of operations, and $\alpha$ an
inverse Ackermann's function (very slow-growing, practically a constant
independent of $n$).  In the sequential setting, the basic data structure for
incremental connectivity is the well-studied union-find data
structure~\cite{CormenLRS:book09}.  Tarjan~\cite{T75} achieves an
$O(\alpha(m, n))$ amortized time per \find{}, which has been shown to be optimal
(see Seidel and Sharir~\cite{SeidelS05} for an alternate analysis).

Recent work on streaming graph algorithms focuses on minimizing the memory
requirement, with little attention given to the use of parallelism. This line of
work has largely focused on the ``semi-streaming model''~\cite{FKMSZ05}, which
allows $O(n \cdot \polylog(n))$ space usage.  In this model, the union-find data
structure~\cite{T75} solves incremental connectivity in $O(n)$ space and a total
time nearly linear in $m$.

When only $o(n)$ of workspace (sublinear) is allowed, interesting tradeoffs are
known for multi-pass algorithms.  For an allotment of $O(s)$ workspace, an
algorithm needs $\Omega(n/s)$ passes~\cite{FKMSZ05} to compute the connected
components of a graph.  Demetrescu et al.~\cite{DFR09} consider the W-stream
model, which allows the processing of streams in multiple passes in a pipelined
manner: the output of the $i$-th pass is given as input to the $(i+1)$-th
pass. They show a tradeoff between the number of passes and the memory
required. With $s$ bits of space, their algorithm computes connected components
in $O((n\log n)/s)$ passes. Demetrescu et al.~\cite{DEMR10} present a simulation
of a PRAM algorithm on the W-Stream model, allowing existing PRAM algorithms to
run \emph{sequentially} in the W-Stream model.

McColl et al.~\cite{MGB13} present a parallel algorithm for maintaining
connected components in a fully dynamic graph, which handles edge deletions---a
more general setting than ours.  As part of a bigger project (STINGER), their
work focuses on engineering algorithms that work well on real-world graphs and
gives no theoretical analysis of the parallel complexity.  In contrast, this
work focuses on achieving the best theoretical efficiency, matching the work of
the best sequential counterpart.


Berry et al.~\cite{BO+2013} present methods for maintaining connected components
in their parallel graph stream model, called X-Stream, which periodically ages
out edges. Their algorithm is essentially an ``unrolling'' of the algorithm
of~\cite{DFR09}, and edges are passed from one processor to another until the
connected components are found by the last processor in the sequence.  Compared
to our work, the input model and notions of correctness differ.  Our work views
the input stream a sequence of batches, each a set of edges or a set of queries,
which are unordered within the set.  Their algorithm strictly respects the
sequential ordering the edges and queries.  Further, they age out edges (we do
not).  Also, they do not give provable parallel complexity bounds.

There are multiple parallel (batch) algorithms for graph connectivity
including~\cite{SDB14,Gazit91} that are work-efficient (linear in the number of
edges) and that have polylogarithmic depth. Prior work on wait-free
implementations of the union-find data structure~\cite{AW91} focuses on the
asynchronous model, where the goal is to be correct under all possible
interleavings of operations; unlike us, they do not focus on bulk processing of
edges. 
There is also a long line of work on sequential algorithms for maintaining graph
connectivity on an evolving graph. See the recent work by~\cite{KKM13} that
addresses this problem in the general dynamic case and the references therein.




\section{Preliminaries and Notation}
\label{sec:prelim}
Throughout the paper, let $[n]$ denote the set $\{0,1,\dots, n\}$. A sequence is
written as $X = \langle x_1, x_2, \dots, x_{|X|} \rangle$, where $|X|$ denotes
the length of the sequence. For a sequence $X$, the $i$-th element is denoted by
$X_i$ or $X[i]$.  Following the set-builder notation, we denote by
$\langle f(x) \,:\, \Phi(x) \rangle$ a sequence generated (logically) by taking
all elements that satisfy $\Phi(x)$, preserving their original ordering, and
transform them by applying $f$.  For example, if $T$ is a sequence of numbers,
the notation $\langle 1 + f(x) \,:\, x \in T \text{ and } x \text{ odd}\rangle$
means a sequence created by taking each element $x$ from $T$ that are odd and
map $x$ to $1 + f(x)$, retaining their original ordering.  Furthermore, we write
$S \oplus T$ to mean the concatenation of $S$ and $T$.

We design algorithms in the work-depth model assuming an underlying CRCW PRAM
machine model. As is standard, the \emph{work} of an algorithm is the total
operation count, and the \emph{depth} (also called parallel time or span) is the
length of the longest chain of dependencies within a parallel computation.  The
gold standard for algorithms in this model is to perform the same amount of work
as the best sequential counterpart (work efficient) and to have polylogarithmic
depth.  We remark that an algorithm designed for the CRCW model can work in
other shared memory models such as EREW PRAM, with a depth that is a logarithmic
factor worse. 


We use standard parallel operations such as filter, prefix sum, map (applying a
constant-cost function), and pack, all of which has $O(n)$ work and at most
$O(\log^2 (n))$ depth on an input sequence of length $n$.  Given a sequence of
$m$ numbers, there is a duplicate removal algorithm \procName{removeDup} running
in $O(m)$ work and $O(\log^2 m)$ depth~\cite{JaJa:book92}. We also use the
following results to sort integer keys in a small range faster than a typical
comparison-based algorithm:
\begin{theorem}[Parallel Integer Sort~\cite{RajasekaranR89}]
  \label{thm:intsort}
  There is an algorithm \procName{intSort} that takes a sequence of integer keys
  $a_1, a_2, \dots, a_n$, each a number between $0$ and $c\cdot{}n$, where $c =
  O(1)$, and produces a sorted sequence in $O(n)$ work and $\polylog(n)$ depth.
\end{theorem}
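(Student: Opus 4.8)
The plan is to reduce the problem to a constant number of stable bucketing passes and then to implement each pass in linear work and polylogarithmic depth. Because every key lies in $[0, c\cdot n]$ with $c = O(1)$, write each key in radix $r = n$; then each key has at most $\lceil \log_n(cn+1)\rceil \le 2$ base-$r$ digits once $n \ge c+1$. By the standard least-significant-digit-first radix sort it suffices to sort stably by each digit in turn, and these $O(1)$ passes produce the fully sorted sequence. Each digit takes a value in $[0, n)$, so a single pass is a stable counting sort of $n$ elements into at most $n$ buckets (the high-order digit ranges only over $[0, c]$ and so is a trivial counting sort with $O(1)$ buckets). Since the number of passes is constant, the total work and depth are, up to constant factors, those of one counting-sort pass, so it suffices to realize a stable counting sort with $n$ buckets in $O(n)$ work and $\polylog(n)$ depth.

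A stable counting sort sends each input element $a_i$, with current digit $d_i$, to output position $\text{off}[d_i] + \rho_i$, where $\text{off}[v] = |\{\,j : d_j < v\,\}|$ is the start of bucket $v$ and $\rho_i = |\{\,j < i : d_j = d_i\,\}|$ is its stable within-bucket rank. The offsets are the exclusive prefix sums of the histogram $H[v] = |\{\,j : d_j = v\,\}|$, and the within-bucket ranks are the per-bucket (segmented) exclusive prefix sums of the bucket-membership indicator. Given $H$ and the $\rho_i$, computing the offsets is one application of the prefix-sum primitive, and writing each element to position $\text{off}[d_i] + \rho_i$ is a scatter with pairwise-distinct targets; both are $O(n)$ work and at most $O(\log^2 n)$ depth. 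Thus the counting sort reduces to computing the histogram together with the stable within-bucket ranks --- equivalently, to grouping elements of equal digit value while preserving their input order and recovering the order of the distinct values.

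I expect this grouping step to be the \emph{main obstacle}, because every naive route loses at least a logarithmic (or $\log\log$) factor: aggregating a dense length-$n$ histogram through a depth-$\log n$ reduction tree over length-$n$ arrays costs $\Theta(n\log n)$ work; concurrent increments into $H$ do not by themselves produce counts; comparison-sorting blocks to group them also costs $\Theta(n\log n)$; and even the natural $\sqrt{n}$-radix recursion, which sums block-local histograms position by position, only reaches $O(n\log\log n)$ work. Shaving this last factor to strictly linear work while keeping polylogarithmic depth is exactly the technical core of the cited algorithm, which uses random sampling --- to estimate bucket populations and to load-balance the distribution of elements into buckets --- so as to compute the histogram and the stable ranks in $O(n)$ work and $\polylog(n)$ depth with high probability. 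Granting this linear-work histogram-and-rank computation, the surrounding radix framework contributes only $O(1)$ passes of prefix sums, maps, and distinct-target scatters, each already within the required bounds.
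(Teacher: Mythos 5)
The paper contains no proof of this statement to compare against: Theorem~\ref{thm:intsort} is imported as a black box from Rajasekaran and Reif~\cite{RajasekaranR89}. Judged on its own terms, your write-up is a clean reduction but not a proof: the step you explicitly ``grant''---computing the histogram together with exact stable within-bucket ranks in $O(n)$ work and $\polylog(n)$ depth---is the entire content of the theorem, as you yourself observe. What you have re-derived is the easy wrapper (prefix sums over offsets, a distinct-target scatter), which is routine; the sampling-based core is deferred to the very citation the paper makes, so in effect your proposal and the paper end up at the same place, with the theorem resting on~\cite{RajasekaranR89}.

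Two substantive issues with the reduction itself are worth flagging. First, the radix decomposition buys nothing here: the keys already lie in $[0, c\cdot n]$ with $c=O(1)$, so a single grouping pass with $cn+1$ buckets suffices; your two-digit split does not shrink the bucket count (the low digit alone has $n$ buckets) and only \emph{adds} a stability requirement. Second, that stability requirement is the delicate point: the sampling-based algorithms you invoke (Reif-style randomized integer sorting, including~\cite{RajasekaranR89}) estimate bucket populations by sampling, allocate padded bucket arrays, and place elements by randomized trials followed by compaction---so equal-key elements land in arbitrary order within a bucket. They do \emph{not} compute exact stable within-bucket ranks, and doing so work-optimally in polylogarithmic depth (stable work-optimal integer sorting) is a recognizably harder problem than the non-stable version. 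Your argument survives only after weakening the requirement: in a two-pass LSD scheme the first ($n$-bucket) pass needs no stability, and the second pass has $c+1=O(1)$ buckets, where stability is elementary via $O(1)$ prefix sums of indicator vectors---or, better, drop the radix scaffolding entirely. Note also that the bounds are randomized (the paper uses the theorem inside Lemma~\ref{lem:resp-dist}, whose work bound is already stated in expectation, and there only contiguity of equal hash values is needed, so non-stable grouping suffices for the paper's purposes as well).
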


\myparagraph{Parallel Connectivity:} For a graph $G = (V, E)$, a connected
component algorithm (\procName{CC}) computes a sequence of connected components
of $G$ $\langle C_i \rangle_{i=1}^k$, where each $C_i$ is a list of vertices in
the component. There are algorithms for \procName{CC} that have $O(|V| + |E|)$
work and $O(\polylog(|V|, |E|))$ depth (e.g., \cite{Gazit91, SDB14}), with
Gazit's algorithm~\cite{Gazit91} requiring $O(\log |V|)$ depth. 






\section{Simple Bulk-Parallel Data Structure}
\label{sec:simple-algo}
This section describes a simple bulk-parallel data structure for incremental
graph connectivity. We describe theoretical improvements to this basic version
in the next section. As before, $n$ is the number of vertices in the graph
stream.  The main result for this section is as follows:

\begin{theorem}
\label{thm:infinite-inc}
There is a bulk-parallel data structure for incremental connectivity, given by
Algorithms \procName{Simple-Bulk-Query} and \procName{Simple-Bulk-Update}, where
\begin{enumerate}[label=(\arabic*),topsep=0pt,parsep=0pt]
\item The total memory consumption is $O(n)$ words.
\item A minibatch of $b$ edges is processed by \procName{Simple-Bulk-Update} in $O(\log (\min\{b,n\}))$ parallel depth and $O(b \log n)$ total work.
\item A minibatch of $q$ connectivity queries, each asking for connectivity between two vertices, is answered by \procName{Simple-Bulk-Query} in $O(\log n)$ parallel depth and $O(q\log n)$ total work.
\end{enumerate}
\end{theorem}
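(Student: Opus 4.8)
The plan is to back the data structure by a single disjoint-set (union--find) forest stored as a parent array of length $n$ with a rank field at each root; this uses $O(n)$ words, giving part~(1) at once, and I would maintain union-by-rank so that every tree has height $O(\log n)$, with $\find$ implemented as the usual walk to the root. For \procName{Simple-Bulk-Query} the $q$ queries are mutually independent, so I would evaluate all $2q$ finds in parallel and report $(u,v)$ connected exactly when $\find(u)=\find(v)$; each find traverses a root path of length at most the height $O(\log n)$, so the queries cost $O(q\log n)$ work and $O(\log n)$ depth, which is part~(3).

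For \procName{Simple-Bulk-Update} on a batch $A$ of $b$ edges, the idea is \emph{not} to run $b$ separate $\union$ calls---which would serialise and contend on shared roots---but to contract the batch first. I would resolve both endpoints of every edge to their current roots, relabel $A$ into a set of edges on roots (which number at most $\min\{2b,n\}$), and then run the work-efficient parallel \procName{CC} routine (e.g.\ Gazit's, of depth $O(\log|V|)$) on this contracted graph $H$. Since $|V_H|\le\min\{2b,n\}$, this costs $O(b)$ work and $O(\log\min\{b,n\})$ depth and returns exactly the groups of old roots that the new edges fuse. For each group I would pick the highest-rank root as the new parent and, in parallel, redirect the remaining roots to it while fixing up ranks. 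The total work is then dominated by the finds, $O(b\log n)$, and the depth by the \procName{CC} call, $O(\log\min\{b,n\})$, yielding part~(2).

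Correctness I would establish by induction on $t$, carrying the invariant that two vertices share a root iff they are connected in $G_t=(V,\bigcup_{i\le t}A_i)$. Relabelling $A_t$ to roots preserves adjacency, \procName{CC} on $H$ identifies precisely the components of $G_{t-1}$ that $A_t$ links together, and redirecting each such group to a common root re-establishes the invariant for $G_t$; the query answers are then correct by the invariant.

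The step I expect to be the real obstacle is the \emph{single}-logarithmic update depth. Resolving endpoints to roots by naive parallel finds costs $O(\log n)$ depth, and the generic sort/pack used to relabel $H$ onto a compact range costs $O(\log^2 b)$ depth---either of which would overshoot $O(\log\min\{b,n\})$ when $b<n$. The crux is therefore to force the depth to be governed by the $\min\{b,n\}$-vertex \procName{CC} alone: I would compress the root paths touched during an update (so intra-batch re-resolution is shallow) while leaving the read-only query paths uncompressed---so that queries still see height $O(\log n)$ and contribute their separate bound in part~(3)---and replace generic sorting by \procName{intSort} restricted to the $O(b)$ active labels. Verifying that these manoeuvres keep finds shallow, preserve the union-by-rank height for future queries, and still charge only $O(b\log n)$ work to the update is where the bulk of the argument lies; parts~(1) and~(3) are essentially immediate from the parent-array representation.
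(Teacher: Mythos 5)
Your construction is, in its skeleton, the same as the paper's: parallel read-only finds for \procName{Simple-Bulk-Query}; for updates, relabel both endpoints of each edge to their current roots, run a linear-work parallel connected-components routine on the contracted graph, and then link up each resulting group of roots. Your correctness argument (induction on minibatches, with the invariant that roots agree iff vertices are connected) is also the paper's. The one real structural difference is the linking step: the paper uses union by size and joins the $k$ roots of a group via a divide-and-conquer \procName{Parallel-Join} in $O(\log k)$ rounds of pairwise unions, whereas you attach all roots of a group directly to the highest-rank one in a single round. Your variant is legitimate, but the ``fixing up ranks'' you wave at must be made explicit: when the maximum rank among the absorbed roots equals the rank $r$ of the chosen root, the chosen root's rank must be bumped to $r+1$ (justified because the merged tree then has size at least $2^{r+1}$), so that height $\le$ rank $\le \log_2 n$ survives for future finds. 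With that stated, the star-link is if anything shallower than \procName{Parallel-Join}, and omitting explicit self-loop removal is harmless since such edges do not change the components of the contracted graph.

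The genuine gap is your final paragraph. The manoeuvre you propose there---compressing root paths touched by updates while leaving query paths uncompressed, plus \procName{intSort} for relabeling---cannot deliver a worst-case $O(\log\min\{b,n\})$ update depth, and you should not chase it. Path compression only shortens paths through vertices that have already been traversed; an adversary can build a tree of height $\Theta(\log n)$ with earlier batches and then submit a batch of $b=2$ edges whose endpoints are deep vertices never touched before, so the first step of the update (resolving endpoints to roots) inherently costs $\Theta(\log n)$ depth regardless of any bookkeeping. You are right that this bound is the obstacle, but the resolution is not a cleverer algorithm: the paper's own complexity lemma for \procName{Simple-Bulk-Update} states and proves only $O(\log n)$ depth ($O(\log n)$ for the parallel finds, $O(\log n)$ for Gazit's CC on the contracted graph, $O(\log k)\le O(\log n)$ for the joins), and the $\min\{b,n\}$ in the theorem statement is supported by that proof only when $b\ge n$. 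So the correct move is to prove exactly the $O(\log n)$-depth, $O(b\log n)$-work bound that your second paragraph already establishes and drop the compression machinery, which in this paper belongs to the separate work-efficient structure of Section~\ref{sec:work-efficient}: there, making compression pay off requires coordinating the concurrent finds (coalescing flows in a BFS and redistributing responses via the \procName{Bulk-Find} two-phase scheme), i.e., far more than a per-find rewrite pass, and even that machinery targets total work, not single-batch depth below $O(\log n)$.
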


In a nutshell, we show how to bootstrap a standard union-find structure to take
advantage of parallelism while preserving the height of the union-find forest to
be at most $O(\log n)$. For concreteness, we will work with union by size,
though other variants (e.g., union by rank) will also work.

\myparagraph{Union-Find:} We review a basic union-find implementation that uses
union by size.  From the viewpoint of graph connectivity, union-find maintains
connectivity information about a graph with vertices $V = [n]$ supporting:
\begin{itemize}[leftmargin=1em]
\item for $u \in V$, $\find(u) \in V$ returns an identifier of the connected
  component that $u$ belongs to. This has the property that
  $\find(u) = \find(v)$ if and only if $u$ and $v$ are connected in the graph.
\item for $u,v \in V$, $\union(u, v)$ links $u$ and $v$ together, making them in
  the same connected component.  It also returns the identifier of the component
  that both $u$ and $v$ now belong to---this is the same identifier one would
  get from running $\find(u)$ or $\find(v)$ at this point.
\end{itemize}

Conceptually, this data structure maintains a union-find forest, one tree for
each connected component.  In this view, $\find(u)$ returns the vertex that is
the root of the tree containing $u$ and $\union(u, v)$ joins together the roots
of the tree containing $u$ and the tree containing $v$.  The trees in a
union-find forest are typically represented by remembering each node's parent,
in an array $\parent{}$ of length $n$, where $\parent[u]$ is the tree's parent
of $u$ or $\parent[u] = u$ if it is the root of its component.

The running time of the union and find operations depends on the maximum height
of a tree in the union-find forest.  To keep the height small, at most
$O(\log n)$, a simple strategy, known as \emph{union by size}, is for \union{}
to always link the tree with fewer vertices into the tree with more vertices.
The data structure also keeps an array for the sizes of the trees.  The
following results are standard (see~\cite{SeidelS05}, for example):
\begin{lemma}[Sequential Union-Find]
  \label{lemma:seq-uf-cost}
  On a graph with vertices $[n]$, a sequential union-find data structure
  implementing the union-by-size strategy consumes $O(n)$ space and has the
  following characteristics:
  \begin{itemize}[topsep=2pt,itemsep=0pt,parsep=0pt,leftmargin=1em]
  \item Every union-find tree has height $O(\log n)$ and each \find{} takes
    $O(\log n)$ sequential time.
  \item Given two distinct roots $u$ and $v$, the operation $\union(u, v)$
    implementing union by size takes $O(1)$ sequential time.
  \end{itemize}
\end{lemma}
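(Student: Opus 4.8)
The plan is to dispatch the three claims in increasing order of difficulty. The space bound and the union cost are essentially immediate from the representation, while the height bound — and with it the find cost — needs a short inductive argument. First I would handle space: the structure stores the $\parent$ array of length $n$ together with an auxiliary size array of length $n$ (recording, for each root, the number of vertices in its tree), for a total of $O(n)$ words. For the union cost, I observe that when $u$ and $v$ are already distinct roots, $\union(u,v)$ reads the two sizes, compares them, redirects the parent pointer of the smaller-sized root to the larger one (breaking ties arbitrarily), and updates the surviving root's size to the sum; each step is a constant number of array accesses and arithmetic operations, so the operation runs in $O(1)$ time. The precondition that $u,v$ are roots matters here: locating roots is the job of \find{}, whose cost is accounted for separately.

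The crux is the height bound, which I would establish via the invariant that every tree in the forest of height $h$ contains at least $2^h$ vertices, proved by induction on the sequence of union operations. The base case is a singleton tree, with height $0$ and $2^0 = 1$ vertex. For the inductive step, consider unioning trees with roots $u,v$, heights $h_1,h_2$, and sizes $s_1,s_2$, where union by size attaches the root of the smaller tree under the root of the larger; assume without loss of generality $s_1 \ge s_2$, so $v$ becomes a child of $u$. The resulting height is $\max(h_1, h_2+1)$. If $h_1 \ge h_2+1$, the height is unchanged at $h_1$ while the size only grows, so the invariant is preserved. Otherwise the new height is $h_2+1$, and since $s_1 \ge s_2 \ge 2^{h_2}$ by the inductive hypothesis, the new size $s_1 + s_2 \ge 2\cdot 2^{h_2} = 2^{h_2+1}$, again satisfying the invariant.

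Since the forest spans $n$ vertices, any tree of height $h$ satisfies $2^h \le n$, giving $h \le \log_2 n = O(\log n)$. The find cost then follows immediately: $\find(u)$ walks from $u$ up to the root of its tree, touching one node per level, so its running time is proportional to the depth of $u$, which is at most the tree height, hence $O(\log n)$. I expect no genuine obstacle in this proof; the only point requiring care is correctly tracking how the height changes under a union — the $\max(h_1, h_2+1)$ accounting — and verifying the size invariant precisely in the case where the height actually increases, which is exactly where the union-by-size decision earns its keep. Everything else is routine bookkeeping.
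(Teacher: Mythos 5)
Your proof is correct, and the size-doubling induction ($2^h$ vertices in any tree of height $h$, with the $\max(h_1,h_2+1)$ case analysis) is exactly the canonical argument for this fact. The paper itself gives no proof at all --- it states the lemma as standard and cites Seidel and Sharir~\cite{SeidelS05} --- so your write-up simply supplies the textbook proof the authors chose to omit, with no divergence in substance to compare.
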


Our data structure maintains an instance of this union-find data structure,
called $U$. Notice that the \find{} operation is read-only.  Unlike the more
sophisticated variants, this version of union-find does not perform path
compression.




\subsection{Answering Connectivity Queries in Parallel}
Connectivity queries can be easily answered in parallel, using read-only
$\find$s on $U$.  To answer whether $u$ and $v$ are connected, we compute
$U.\find(v)$ and $U.\find(u)$, and report if the results are equal.  To answer
multiple queries in parallel, we note that because the $\find$s are read-only,
we can answer all queries simultaneously independently of each other.  We
present \procName{Simple-Bulk-Query} in Algorithm~\ref{algo:basic-bulk-query}.
\begin{algorithm}
\caption{$\procName{Simple-Bulk-Query}(U, \langle (u_i,v_i)\rangle_{i=1}^q$).}
\label{algo:basic-bulk-query}
\small
\KwIn{$U$ is the union find structure, and $(u_i,v_i)$ is a pair of vertices, for $i=1,\ldots,q$.}
\KwOut{For each $i$, whether or not $u_i$ is connected to $v_i$ in the graph.}

\For(in parallel){$i=1,2,\dots, q$}
{
  $a_i \gets (U.\find(u_i) == U.\find(v_i))$\;
}
\Return $\langle a_1, a_2, \ldots, a_q \rangle$\;
\end{algorithm}

Correctness follows directly from the correctness of the base union-find
structure.  The parallel complexity is simply that of applying $q$ operations of
$U.\find{}$ in parallel:

\begin{lemma}
\label{lemma:wd-query}
The parallel depth of \procName{Simple-Bulk-Query} is $O(\log n)$, and the work
is $O(q \log n)$, where $q$ is the number of queries input to the algorithm.
\end{lemma}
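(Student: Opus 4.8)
The plan is to reduce both bounds to the cost of a single \find{} on $U$ and then argue that the loop in Algorithm~\ref{algo:basic-bulk-query} parallelizes perfectly. The two ingredients I would invoke are the height bound from Lemma~\ref{lemma:seq-uf-cost} --- every tree in $U$ has height $O(\log n)$, so a single \find{} takes $O(\log n)$ sequential time --- and the observation already noted in the text that \find{} is read-only and therefore performs no writes to the $\parent$ array.

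First I would bound the work. Each of the $q$ iterations performs exactly two invocations of $U.\find$ plus one equality comparison. By Lemma~\ref{lemma:seq-uf-cost} each \find{} costs $O(\log n)$, while the comparison and the write into the output slot $a_i$ cost $O(1)$, so a single iteration does $O(\log n)$ work. Summing over the $q$ iterations yields total work $O(q\log n)$, as claimed.

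For the depth I would first argue that the $q$ iterations are mutually independent, and hence that the depth of the whole algorithm equals the depth of one iteration. Since every \find{} only reads $\parent$ and each iteration writes solely to its own output cell $a_i$, no two iterations share a written location and no write ever races with a read of $\parent$; in the assumed CRCW model the simultaneous reads of $\parent$ by distinct iterations are permitted. Thus the parallel-\textbf{for} loop incurs no serialization. Within a single iteration the two \find{} calls are themselves independent and can run concurrently, and each walks the path from its argument up to the root --- a chain of $O(\log n)$ pointer dereferences by the tree-height bound --- so each has depth $O(\log n)$. The final comparison adds $O(1)$ depth, giving overall depth $O(\log n)$.

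The analysis is largely routine; the one point that genuinely requires care, and which I regard as the crux, is the justification that the read-only nature of \find{} eliminates all write conflicts, so that the per-iteration depth rather than the aggregate work governs the span. This is exactly where the design choice matters: had path compression been used during queries (it is deliberately omitted in this variant), the \find{} calls would mutate $\parent$, the independence argument would break, and the clean $O(\log n)$ depth bound would no longer follow.
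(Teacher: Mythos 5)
Your proof is correct and takes essentially the same approach as the paper, which likewise reduces the bounds to $q$ independent applications of the read-only $U.\find{}$, each costing $O(\log n)$ by the union-by-size height bound of Lemma~\ref{lemma:seq-uf-cost}. Your explicit justification that the absence of writes to $\parent$ eliminates conflicts is simply a spelled-out version of the paper's one-line remark that the $\find$s are read-only and can therefore run simultaneously.
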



\subsection{Adding a Minibatch of Edges}
\emph{How can one incorporate (in parallel) a minibatch of edges $A$ into an
  existing union-find structure?} Sequentially, this is simple: invoke $\union$
on the endpoints of every edge of $A$.  To make it parallel, though, we cannot
blindly apply the $\union$ operations in parallel. Because $\union$ updates the
forest, running multiple $\union$ operations independently in parallel can
create inconsistencies in the structure.

We observe, however, that it is safe run multiple $\union$s in parallel as long
as they operate on different trees. This is not sufficient, as there may be a
number of union operations involving the same tree, and running these
sequentially will result in a large parallel depth.  For instance, consider
adding the edges of a star graph (with a very high degree) to an empty graph.
Because all the edges share a common endpoint, the center of the star is
involved in every $\union$, and hence no two operations can proceed in parallel.


To tackle this problem, our algorithm transforms the minibatch of edges $A$ into
a structure that can be connected up easily in parallel.  For illustration, we
revisit the example when the minibatch is itself a star graph.  Suppose there
are seven edges within the minibatch:
$(v_1,v_2), (v_1,v_3),(v_1,v_4),\ldots, (v_1,v_8)$.  By examining the minibatch,
we find that all of $v_1, \dots, v_8$ will belong to the same component. We now
apply these connections to the graph.

In terms of connectivity, it does not matter whether we apply the actual edges
that arrived, or a different, but equivalent set of edges; it only matters that
the relevant vertices are connected up. To connect up these vertices, our
algorithm schedules the $\union$s in only three parallel rounds as follows. The
notation $X \Vert Y$ indicates that $X$ and $Y$ are run in parallel:
\begin{enumerate}[label={\arabic*:},topsep=2pt,parsep=1pt,itemsep=0pt,leftmargin=1.5em]
\item \small $\union(v_1,v_2) \Vert \union(v_3,v_4) \Vert \union(v_5,v_6) \Vert \union(v_7,v_8)$
\item \small $\union(v_1,v_3) \Vert \union(v_5,v_7)$
\item \small $\union(v_1,v_5)$
\end{enumerate}

As we will soon see, such a schedule can be constructed for a component of any
size provided that no two of vertices in the component are connected previously. The resulting
parallel depth is logarithmic in the size of the minibatch.

\begin{algorithm}[tbh]
\caption{$\procName{Simple-Bulk-Update}(U,A)$}
\label{algo:simple-bulk-update}
\small
\KwIn{$U$: the union find structure,  $A$: a set of edges to add to the graph.}

\tcp{Relabel each $(u,v)$ with the roots of $u$ and $v$}
$A' \gets \langle (p_u, p_v) \,:\, (u, v) \in A \text{ where } p_u = U.\find(u) \text{ and } p_v = U.\find(v) \rangle$\;
\tcp{Remove self-loops}
$A'' \gets \langle (u, v) \,:\, (u, v) \in A' \text{ where } u \neq v \rangle$\;

$\mathcal{C} \gets \procName{CC}(A'')$\;

\ForEach(in parallel){$C \in \mathcal{C}$}
{
  $\procName{Parallel-Join}(U, C)$
}
\end{algorithm}

To add a minibatch of edges, our \procName{Simple-Bulk-Update} algorithm,
presented in Algorithm~\ref{algo:simple-bulk-update}, proceeds in three steps:

\smallskip
\noindent$\rhd$\,\textbf{Step 1:} Relabel edges as links between existing
components.  An edge $\{u, v\} \in A$ does not simply join vertices $u$
and $v$.  Due to potential existing connections in $G$, it
joins together $C_u$ and $C_v$, the component containing $u$ and the component
containing $v$, respectively.  In our representation, the identifier of the
component containing $u$ is $U.\find(u)$, so $C_u = U.\find(u)$ and similarly
$C_v = U.\find(v)$.  Lines~1-2 in Algorithm~\ref{algo:simple-bulk-update} create
$A''$ by relabeling each endpoint of an edge with the identifier of its
component, and dropping edges that are within the same component.

\smallskip
\noindent$\rhd$\,\textbf{Step 2:} Discover new connections arising from $A$.
After the relabeling step, we are implicitly working with the graph
$\tilde{H} = (V_{\tilde{H}}, A'')$, where $V_{\tilde{H}}$ is the set of all
connected components of $G$ that pertain to $A$ (i.e., all the roots in the
union-find forest reachable from vertices incident on $A$) and $A''$ is the
connections between them.  In other words, $\tilde{H}$ is a graph on
``supernodes'' and the connections between them using the edges of $A$.  In this
view, a connected component on $\tilde{H}$ represents a group of existing
components of $G$ that have just become connected as a result of incorporating
$A$.  While never materializing the vertex set $V_{\tilde{H}}$, Line~3 in
Algorithm~\ref{algo:simple-bulk-update} computes $\mathcal{C}$, the set of
connected components of $\tilde{H}$, using a linear-work
parallel algorithm for connected components, \procName{CC} (see Section~\ref{sec:prelim}).

\smallskip
\noindent$\rhd$\,\textbf{Step 3:} Commit new connections to $U$.  With the
preparation done so far, the final step only has to make sure that the pieces of
each connected component in $\mathcal{C}$ are linked together in $U$.  Lines~4-5
of Algorithm~\ref{algo:simple-bulk-update} go over the components of
$\mathcal{C}$ in parallel, seeking help from \procName{Parallel-Join}, the real
workhorse that links together the pieces.

\myparagraph{Connecting a Set of Components within $U$:} Let $v_1, v_2, \dots, v_k \in [n]$
be distinct tree roots from the union-find forest $U$ that form a component in $\mathcal{C}$, and need
to be connected together. Algorithm \procName{Parallel-Join} connects them up
in $O(\log k)$ iterations using a divide-and-conquer approach.  Given a sequence of
tree roots, the algorithm splits the sequence in half and recursively connects
the roots in the first half, in parallel with connecting the roots in the
second half. Since components in the first half and the second half
have no common vertices, handling them in parallel will not cause a conflict.
Once both calls return with their respective new roots, they are
unioned together.

\begin{algorithm}
\caption{$\procName{Parallel-Join}(U, C)$}
\label{algo:simple-join}
\small
\KwIn{$U$: the union-find structure, $C$: a seq.~of tree roots}
\KwOut{The root of the tree after all of $C$ are connected}
\If{$|C|==1$}{\Return $C[1]$}
\Else{
  $\ell \gets \lfloor |C|/2 \rfloor$\;
  $u \gets \procName{Parallel-Join}(U, C[1,2,\ldots,\ell])$ \textbf{in parallel with }
  \quad $v \gets \procName{Parallel-Join}(U, C[\ell+1,\ell+2,\ldots,|C|])$\;
  \Return {$U.\union(u, v)$}\;
}
\end{algorithm}

Correctness of \procName{Parallel-Join} is immediate since the order that the \union{} calls are made does not matter, and we know that different \union{} calls that proceed in parallel always work on separate sets of tree roots, posing no conflicts. 

\begin{lemma}
\label{lemma:simple-join-complexity}
Given $k$ distinct roots of $U$, Algorithm \procName{Parallel-Join} runs in $O(k)$ work and $O(\log k)$ depth.
\end{lemma}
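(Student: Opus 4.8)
The plan is to read off two recurrences from the recursive structure of \procName{Parallel-Join} and solve them, but first I would nail down the structural invariant that makes every \union{} call cheap. Specifically, I would prove by induction on $|C|$ that, when invoked on a sequence of \emph{distinct} union-find roots, \procName{Parallel-Join} returns a single root that is itself one of the input roots. The base case ($|C|=1$) is immediate. For the inductive step, the two recursive calls operate on $C[1,\dots,\ell]$ and $C[\ell+1,\dots,|C|]$, which are disjoint subsequences of distinct roots; by the inductive hypothesis each call returns a root drawn from its own (disjoint) half. Hence the two values $u$ and $v$ passed to the final $U.\union(u,v)$ are themselves distinct roots, and the result of that union is again a single root belonging to $C$. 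This invariant is exactly the precondition needed to invoke the $O(1)$ cost for union-by-size from Lemma~\ref{lemma:seq-uf-cost}.

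With the invariant in hand, the \textbf{work} bound follows from the recurrence $W(k) = W(\lfloor k/2\rfloor) + W(\lceil k/2\rceil) + O(1)$ with $W(1)=O(1)$, where the additive $O(1)$ accounts for the single union-by-size call (legitimate because its two arguments are distinct roots) plus constant bookkeeping for splitting the sequence. Since the recursion tree over a sequence of $k$ elements is a balanced binary tree with $k$ leaves and therefore $O(k)$ internal nodes, summing the $O(1)$ cost charged at each node gives $W(k) = O(k)$.

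For the \textbf{depth} bound, the two recursive calls run in parallel, so the relevant recurrence takes the maximum of the two subproblem depths rather than their sum: $D(k) = \max\{D(\lfloor k/2\rfloor), D(\lceil k/2\rceil)\} + O(1) = D(\lceil k/2\rceil) + O(1)$, with $D(1)=O(1)$. The additive $O(1)$ is again the single $O(1)$-time union (plus the split), which is the only work that must follow the parallel recursive calls. Unrolling this recurrence across the $O(\log k)$ levels of the balanced recursion tree yields $D(k) = O(\log k)$.

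The one step that requires genuine care—rather than a mechanical solution of a textbook recurrence—is the distinctness invariant: it is what licenses charging only $O(1)$ per union. Were the two subproblems allowed to share a root, a union might be applied to two vertices already in the same tree, and the clean $O(1)$ accounting (and indeed correctness of the returned root) would break down. The disjointness of the two halves of $C$, propagated through the induction, is precisely what rules this out, so I would present that invariant explicitly before stating the recurrences.
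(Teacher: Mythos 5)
Your proposal is correct and takes essentially the same approach as the paper, which states the lemma without a formal proof but justifies it in the surrounding prose by the same divide-and-conquer argument: the two halves of $C$ involve disjoint sets of roots, so each $\union$ is an $O(1)$ union-by-size of two distinct roots (Lemma~\ref{lemma:seq-uf-cost}), and the standard recurrences $W(k)=2W(k/2)+O(1)$ and $D(k)=D(k/2)+O(1)$ give the bounds. Your explicit distinctness invariant simply formalizes what the paper leaves implicit.
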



\begin{lemma}[Correctness of~\procName{Simple-Bulk-Update}]
If $U$ is the shared-memory union-find data structure formed by a sequence of minibatch arrivals whose union equals the graph $G$, then for any $u, v \in V$, $U.\find(u) = U.\find(v)$ if and only if $u$ and $v$ are connected in $G$.
\end{lemma}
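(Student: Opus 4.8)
The plan is to prove a slightly stronger invariant by induction on the number of minibatches processed: after \procName{Simple-Bulk-Update} has been applied to $A_1, A_2, \dots, A_t$ in sequence, the structure $U$ satisfies $U.\find(u) = U.\find(v)$ if and only if $u$ and $v$ are connected in $G_t = (V, \cup_{i=1}^t A_i)$. The base case $t = 0$ is immediate: with every vertex forming its own singleton tree, $U.\find(u) = U.\find(v)$ holds exactly when $u = v$, matching connectivity in the empty graph. The lemma then follows by instantiating $t$ at the index of the last minibatch, so that $G_t = G$.

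For the inductive step, assume the invariant holds for $G_{t-1}$ and examine the execution of \procName{Simple-Bulk-Update}$(U, A_t)$. Write $r_x = U.\find(x)$ for the root of $x$ \emph{before} the update; by the inductive hypothesis these roots are precisely the connected-component identifiers of $G_{t-1}$, so they partition $V$ into the components of $G_{t-1}$. The crux is the following correspondence, which I would isolate as a claim: for all $x, y \in V$, the vertices $x$ and $y$ are connected in $G_t$ if and only if $r_x$ and $r_y$ lie in the same connected component of the contracted graph $\tilde{H} = (V_{\tilde H}, A'')$ computed on Line~3. The forward direction maps a $G_t$-path $x = w_0, w_1, \dots, w_\ell = y$ to the sequence of roots $r_{w_0}, \dots, r_{w_\ell}$: each edge $\{w_{i-1}, w_i\}$ lies either in $G_{t-1}$ (whence $r_{w_{i-1}} = r_{w_i}$ by the inductive hypothesis) or in $A_t$ (whence its relabeling either is a discarded self-loop, if internal to a component, or is an edge of $A''$ joining $r_{w_{i-1}}$ to $r_{w_i}$); deleting repetitions yields a walk from $r_x$ to $r_y$ in $\tilde H$. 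For the converse, a walk in $\tilde H$ uses edges of $A''$, each of which is the relabeling of a genuine edge of $A_t \subseteq G_t$; stitching these crossing edges together with the intra-component paths guaranteed by the inductive hypothesis produces an $x$--$y$ path in $G_t$. The relabeling and self-loop removal on Lines~1--2 are exactly what make this correspondence hold, since every edge of $A''$ joins two \emph{distinct} $G_{t-1}$-roots.

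It then remains to check that the \procName{Parallel-Join} calls realize this correspondence inside $U$: after the update, two roots acquire the same \find-value if and only if they lie in the same component $C \in \mathcal{C}$. Because each $C$ is a connected component of $\tilde H$, the family $\mathcal{C}$ partitions the relevant roots into disjoint sets, so the parallel calls over distinct $C$ never share a root and cannot create any cross-component connection. Within a single $C$, correctness of \procName{Parallel-Join} (Lemma already established) guarantees that all of its roots are \union{}ed into one tree and hence come to share a common root. Combined with the claim, $U.\find(x) = U.\find(y)$ after the update holds precisely when $r_x$ and $r_y$ lie in the same $\tilde H$-component, which by the claim is precisely when $x$ and $y$ are connected in $G_t$, completing the induction.

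The step I expect to be the main obstacle is the forward direction of the correspondence claim, where a $G_t$-path must be contracted to a walk in $\tilde H$: one must argue carefully that each edge either collapses (consecutive equal roots, or an $A_t$-edge internal to a component that becomes a discarded self-loop) or appears, after relabeling, as an honest edge of $A''$, and that the deduplicated root sequence is genuinely a connected walk rather than a disconnected fragment. A secondary subtlety worth flagging is that \procName{Parallel-Join} operates on \emph{live} roots, since \find/\union identifiers change as unions are committed; this is handled by returning and re-\union{}ing the updated roots, and the disjointness of the components in $\mathcal{C}$ ensures the parallel joins remain conflict-free throughout.
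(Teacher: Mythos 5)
Your proof is correct and follows essentially the same route as the paper's: induction over the sequence of minibatches, with connectivity in $G_t$ translated through the relabeled edge set $A''$ (the contracted graph $\tilde H$), and \procName{Parallel-Join} shown to merge exactly the roots lying in a common component of $\mathcal{C}$. Packaging the two directions as a single iff-claim about $\tilde H$ is a modest reorganization of the paper's Case~I/Case~II split on the pair $(x,y)$---and it has the minor virtue of making explicit the ``$r_x$ and $r_y$ are not in the same component of $\mathcal{C}$'' step that the paper's Case~I asserts tersely---but it is not a genuinely different argument.
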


\begin{proof}
Consider a minibatch of edges $A$. Let $G_1$ be the set of edges that arrived prior to $A$ and $U_1$ the state of the union-find structure formed by inserting $G_1$. Let $G_2 = G_1 \cup A$ and let $U_2$ be the state of the union-find structure after \procName{Simple-Bulk-Update}$(U_1,A)$.  We will assume inductively that $U_1$ is correct with respect to $G_1$ and show that $U_2$ is correct with respect to $G_2$.

Let $x \neq y$ be a pair of vertices in $V$. We consider the following two cases.


{\bf Case I: $x$ and $y$ are not connected in $G_2$.} In this case, $x$ and $y$ are not 
connected in $G_1$ either. Let $r_x = U_1.\find(x), r_y = U_1.\find(y)$. From the inductive assumption, we know $r_x \neq r_y$.
Note that $A$ will not contain a path between $x$ and $y$. Hence in ${\cal C}$, the connected
components of $A''$, $r_x$ and $r_y$ will not be in the same component. When ${\cal C}$ 
is applied to $U_1$ in \procName{Parallel-Join}, the components containing $r_x$ and $r_y$ 
are not linked together, and hence it is still true that $U_2.\find(x) = U_2.\find(r_x) \neq U_2.\find(r_y) = U_2.\find(y)$.

{\bf Case II: $x$ and $y$ are connected in $G_2$.}
There must be a path $x = v_1, v_2, \dots, v_t = y$ in $G_2$. We will show that $U_2.\find(v_1) = U_2.\find(v_2) = \ldots = U_2.\find(v_t) $, leading to the conclusion $U_2.\find(x) = U_2.\find(y)$. Consider any pair $v_i$ and $v_{i+1}$, $1 \le i \le (t-1)$. Let $r_i=U_1.\find(v_i)$ and $r_{i+1}=U_1.\find(v_{i+1})$ denote the roots of the trees that contain $v_i$ and $v_{i+1}$ respectively in $U_1$. Suppose that $r_i=r_{i+1}$, then it will remain true that $U_2.\find(v_i)=U_2.\find(r_i)=U_2.\find(r_{i+1})=U_2.\find(v_{i+1})$.
Next consider the case $r_i \neq r_{i+1}$. Then $v_i$ and $v_{i+1}$ are not connected in $G_1$. To see this, suppose that $v_i$ and $v_{i+1}$ were connected in $G_1$. Then, $U_1.\find(v_i) = U_1.\find(v_{i+1})$, and it will remain true that $r_i=U_2.\find(v_i)=U_2.\find(v_{i+1})=r_{i+1}$.
In Steps 1 and 2 of \procName{Simple-Bulk-Update}, the edge $(r_i,r_{i+1})$ is inserted into $A''$ (note this edge is not a self-loop and is not eliminated in Step 2). In Step 3, when the connected components of $A''$ are computed, $r_i$ and $r_{i+1}$ are in the same component of ${\cal C}$. In $\procName{Parallel-Join}$, the subtrees rooted at $r_i$ and $r_{i+1}$ are unioned into the same component in $U_2$. As a result, $U_2.\find(r_i)=U_2.\find(r_{i+1})$. Since $U_2.\find(v_i) = U_2.\find(r_i)$ and $U_2.\find(v_{i+1}) = U_2.\find(r_{i+1})$, we have $U_2.\find(v_i)=U_2.\find(v_{i+1})$. Proceeding thus, we have $U_2.\find(x)=U_2.\find(y)$ in Case II.
\end{proof}


\smallskip 
\begin{lemma}[Complexity of \procName{Simple-Bulk-Update}]
\label{lemma:wd-update}
Given a minibatch $A$ with $b$ edges, \procName{Simple-Bulk-Update} takes $O(b \log n)$
work and $O(\log n)$ depth.
\end{lemma}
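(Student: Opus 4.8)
The plan is to bound the work and depth of each of the three steps of \procName{Simple-Bulk-Update} separately and then combine them, showing that the relabeling \find{}s dominate the work while the overall depth stays logarithmic. I would decompose the algorithm exactly along its three labeled steps and charge each to the appropriate primitive already analyzed earlier in the paper.

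First I would handle Step~1 (Lines~1--2). Forming $A'$ requires $2b$ invocations of $U.\find$, one per endpoint. By Lemma~\ref{lemma:seq-uf-cost}, every tree in $U$ has height $O(\log n)$, so each \find{} costs $O(\log n)$ sequential time; and since the \find{}s are read-only and mutually independent, they all run simultaneously. This contributes $O(b\log n)$ work and $O(\log n)$ depth. The subsequent map that relabels endpoints and the filter that discards self-loops to produce $A''$ are standard parallel primitives on a sequence of length $b$, adding only $O(b)$ work and polylogarithmic depth.

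Next I would bound Step~2 (Line~3). The graph $\tilde{H}$ whose components are computed has at most $b$ edges and at most $2b$ relevant supernodes, so invoking \procName{CC} on $A''$ costs $O(b)$ work and $O(\polylog(b))$ depth by the connected-components bound recalled in Section~\ref{sec:prelim}. For Step~3 (Lines~4--5), I would apply Lemma~\ref{lemma:simple-join-complexity}: a component $C\in\mathcal{C}$ of size $k$ is joined in $O(k)$ work and $O(\log k)$ depth. Because the components are vertex-disjoint sets of distinct roots, the \union{}s issued for different components never touch a common tree and may proceed in parallel without conflict. Summing work over components, $\sum_{C}|C|$ is at most the number of distinct roots appearing in $A''$, which is $O(b)$; hence Step~3 costs $O(b)$ work, and its depth is $\max_{C}O(\log|C|)=O(\log n)$ since every $|C|\le n$.

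Combining the three steps, the total work is $O(b\log n)+O(b)+O(b)=O(b\log n)$, dominated by the relabeling \find{}s, and the total depth is $O(\log n)$, dominated by the depth of a single \find. The step I expect to require the most care is the work accounting for \procName{Parallel-Join}: one must observe that although a single component may be large, the sizes of all components together are bounded by the number of distinct roots incident to $A$, keeping the aggregate join work linear in $b$; paired with this, one must confirm that the parallel execution across components is genuinely conflict-free, which holds precisely because no path compression is performed and the components partition the roots into disjoint trees.
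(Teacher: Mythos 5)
Your decomposition is the same as the paper's: the three labeled steps, with the relabeling \find{}s charged at $O(\log n)$ each via Lemma~\ref{lemma:seq-uf-cost}, the components via the linear-work \procName{CC} bound, and the joins via Lemma~\ref{lemma:simple-join-complexity} with the aggregate-size observation $\sum_C |C| = O(b)$. The one place your accounting does not quite deliver the stated bound is the depth of Step~2 (and of the map/filter): you invoke only a generic $O(\polylog)$ bound for \procName{CC}, yet conclude that the total depth is $O(\log n)$ ``dominated by a single \find{}.'' As written, your argument only establishes polylogarithmic depth overall. The paper closes this by invoking Gazit's algorithm~\cite{Gazit91} specifically, which runs in $O(\log |V|)$ depth, and by charging the filtering step at $O(1)$ depth (in the CRCW model assumed throughout), so that every stage is $O(\log n)$ deep. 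This is a one-line fix rather than a conceptual gap---your conflict-freeness and work-aggregation reasoning for \procName{Parallel-Join}, which you correctly flag as the delicate part, matches the paper's---but to claim $O(\log n)$ rather than $O(\polylog(n))$ depth you must name a connected-components subroutine with logarithmic depth rather than appeal to an unspecified polylogarithmic one.
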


\begin{proof}
  There are three parts to the work and depth of \procName{Simple-Bulk-Update}.
  First is the generation of $A'$ and $A''$.  For each $(u, v) \in A$, we invoke
  $U.\find{}$ on $u$ and $v$, requiring $O(\log n)$ work and depth per edge.
  Since the edges are processed in parallel, this leads to $O(b \log n)$ work
  and $O(\log n)$ depth.  Then, $A''$ is derived from $A'$ through a parallel
  filtering algorithm, using $O(|A'|) = O(b)$ work and $O(1)$ depth. The second
  part is the computation of connected components of $A''$ which can be done in
  $O(|A''|) = O(b)$ work and $O(\log n)$ depth using the algorithm of
  Gazit~\cite{Gazit91}.  The third part is \procName{Parallel-Join}.  As the
  number of components cannot exceed $b$, and using
  Lemma~\ref{lemma:simple-join-complexity}, we have that the total work in
  \procName{Parallel-Join} is $O(b)$ and depth is $O(\log n)$.  Adding the three
  parts, we arrive at the lemma.
\end{proof}


\section{Work-Efficient Parallel Algorithm}
\label{sec:work-efficient}

\newcommand\ncoord[2][0,0]{%
    \tikz[remember picture,overlay]{\path (#1) coordinate (#2);}%
}
\tikzstyle{mybox} = [draw=gray!75, fill=gray!15, thick,
    rectangle, rounded corners, inner sep=5pt]

\begin{algorithm*}[thb]
  \caption{$\procName{Bulk-Find}(U, S)$---find the root in $U$ for each
    $s \in S$ with path compression.}
  \label{algo:bulk-find}
  \small
  \KwIn{$U$ is the union find structure. For $i=1,\ldots,|S|$, $S[i]$ is a
    vertex in the graph} 
  \KwOut{A response array $\textit{res}$ of length $|S|$ where $\textit{res}[i]$
    is the root of the tree of the vertex $S[i]$ in the input.}

  \tcp{\textbf{Phase I:} Find the roots for all queries}
  $R_0 \gets \langle (S[k], \nul) \,:\, k = 0, 1, 2, \dots, |S|-1 \rangle$\;
  $F_0 \gets \procName{mkFrontier}(R_{0}, \emptyset)$,
  $\textit{roots} \gets \emptyset$, 
  $\textit{visited} \gets \emptyset$,
  $i \gets 0$\;
  \While{$R_i \neq \emptyset$}{
   $\textit{visited} \gets \textit{visited} \cup F_i$\;
   $R_{i+1} \gets \langle(\parent[v], v) \,:\, v \in F_i \text{ and } \parent[v] \neq v \rangle$ \ncoord[1.5em,1em]{floater}\;
   $\textit{roots} \gets \textit{roots} \cup \{ v \,:\, v\in F_i \text{ where } \parent[v] = v \}$\;
   $F_{i+1} \gets \procName{mkFrontier}(R_{i+1}, \textit{visited})$, 
   $i \gets i + 1$\;
  }
  \tcp{Set up response distribution}
  Create an instance of \textit{RD} with
  $R_\cup = R_0 \oplus R_1 \oplus \dots \oplus R_i$\;

  \tcp{\textbf{Phase II:} Distribute the answers and shorten the paths}
  $D_0 \gets \{ (r,r) \,:\, r \in \textit{roots} \}$,
  $i \gets 0$\;
  \While {$D_i \neq \emptyset$}{
    For each $(v, r) \in D_i$, in parallel, $\parent[v] \gets r$\;
    $D_{i+1} \gets \bigcup_{(v,r) \in D_i} \left\{ (u, r) \,:\, u \in
      \textit{RD}.\procName{allFrom}(v) \text{ and } u \neq \nul \right\}$.
    That is, create $D_{i+1}$ by expanding every $(v, r) \in D_i$ as the entries
    of $\textit{RD}.\procName{allFrom}(v)$ excluding $\nul$, each inheriting
    $r$.\;
    $i \gets i + 1$ 
  }
  For $i=0,1,2\dots,|S|-1$, in parallel, make $\textit{res}[i] \gets \parent[S[i]]$\;
  \Return {$\textit{res}$}
\tikz[overlay,remember picture,auto] {
  \node [mybox, below, right] at (floater) {%
    \begin{minipage}{3in}
      \textbf{def} \procName{mkFrontier}$(R, \textit{visited})$:
      \begin{enumerate}[leftmargin=1em,topsep=0pt,label={\footnotesize\textsf{\arabic*:}},itemsep=0pt,parsep=1pt]
      \item[] // nodes to go to next
      \item $\textit{req} \gets \langle v \,:\, (v, \string_) \in R \land \textbf{ not } \textit{visited}[v] \rangle$ 
      \item \textbf{return} $\procName{removeDup}(\textit{req})$
      \end{enumerate}
    \end{minipage}
  };
}\;
\end{algorithm*}



Whereas the best sequential data structures (e.g.,~\cite{T75}) require
$O((m+q)\alpha(m+q, n))$ work to process $m$ edges and $q$ queries, our basic
data structure from the previous section needs up to $O((m+q)\log n)$ work for
the same input stream.  This section describes improvements that make it match
the best sequential work bound while preserving the polylogarithmic depth
guarantee.  The main result for this section is as follows:

\begin{theorem}
\label{thm:workeff-algo}
There is a bulk-parallel data structure for incremental connectivity over an
infinite window with the following properties:
\begin{enumerate}[label=(\arabic*),topsep=0pt,itemsep=0pt]
\item The total memory consumption is $O(n)$ words.
\item The depth of \procName{Bulk-Update} and \procName{Bulk-Query} is $O(\log n)$ each.
\item Over the lifetime of the data structure, the total work for processing $m$
  edge updates (across all \procName{Bulk-Update}) and $q$ queries is
  $O((m+q)\alpha(m+q, n))$.
\end{enumerate}
\end{theorem}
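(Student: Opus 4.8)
The plan is to prove the three claims in order, treating parts~(1) and~(2) as the routine part and part~(3)---the inverse-Ackermann work bound---as the real content. For part~(1), the only persistent state is the \parent{} array together with a rank array, each of length $n$, so the steady-state memory is $O(n)$ words; every auxiliary structure created inside \procName{Bulk-Find} (the frontiers $R_i,F_i$, the distribution sets $D_i$, and the \textit{RD} instance) is transient and bounded by the size of the minibatch being processed, so it does not affect the bound. For part~(2), I would bound the number of rounds of each \texttt{while}-loop in Algorithm~\ref{algo:bulk-find}: under union-by-rank every tree has height $O(\log n)$ (path compression only shortens paths and never raises a node's rank, so rank remains an upper bound on height), Phase~I advances one forest level upward per round, and Phase~II pushes each root one level downward along the same recorded back-pointers, so both loops terminate in $O(\log n)$ rounds, each round being a constant number of the parallel primitives of Section~\ref{sec:prelim}; this drives the stated depth, and \procName{Bulk-Query} is just \procName{Bulk-Find} followed by a parallel equality test.

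The conceptual core is to show that \procName{Bulk-Find} is functionally equivalent to performing, on every element of $S$, an ordinary \find{} with \emph{full path compression}, while touching each distinct forest node on the union of all find-paths exactly once. For Phase~I I would argue that, starting from $R_0=\langle (S[k],\nul)\rangle$, each round replaces a frontier node $v$ by its parent---recording the back-pointer $(\parent[v],v)$---unless $v$ is already a root, in which case $v$ is collected into \textit{roots}; the \textit{visited} set together with \procName{removeDup} guarantees that each node enters some frontier $F_i$ at most once, so Phase~I finds the correct root for every $s\in S$ and visits each node on the find-paths exactly once. For Phase~II I would argue that seeding $D_0$ with $(r,r)$ for each $r\in\textit{roots}$ and repeatedly expanding $(v,r)$ to the children recorded in \textit{RD} via \procName{allFrom} sets $\parent[u]\gets r$ for \emph{every} node $u$ that lay on some find-path; this installs exactly the compression a sequence of individual path-compressing \find{}s would have produced. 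It follows that a single \procName{Bulk-Find} costs $\Theta$ of the number of distinct nodes on the find-paths of $S$, and that afterwards all of those nodes point directly at their roots.

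The main obstacle is part~(3), because the $\alpha$ bound is an inherently sequential, amortized statement and must be transferred to the bulk setting without losing a factor. The plan is to exhibit a reference sequential execution $\sigma$ that uses union-by-rank and path-compressing \find{}, processes exactly the same $m$ edge-\union{}s and $q$ query-\find{}s, and orders them, within each minibatch, with the \find{}s first and then the \union{}s in the same order \procName{Parallel-Join} performs them. I would then establish two invariants by induction over minibatches: (i) the parallel data structure and $\sigma$ have identical \parent{} and rank arrays at every minibatch boundary, and (ii) the total number of node-visits charged to all \procName{Bulk-Find} calls is at most the total \find{}-path length incurred by $\sigma$. Invariant~(i) follows because \procName{Bulk-Find} installs exactly the full compression of the sequential \find{}s (previous paragraph), and because union-by-rank makes each \union{} outcome depend only on the (matched) order of \union{}s; so the two forests stay in lockstep.

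Invariant~(ii) is the delicate point and the step I expect to be hardest to state cleanly. Within a single bulk, \procName{Bulk-Find} visits each node on the current find-paths exactly once (frontier deduplication), so its count equals the number $N$ of distinct such nodes in the pre-bulk snapshot. In $\sigma$, even though earlier compressing \find{}s may bypass a node $x$ for later \find{}s, the \find{} that first compresses $x$'s child past $x$ must itself have traversed $x$; hence every one of those $N$ distinct nodes is traversed by at least one \find{} in $\sigma$, giving $\sigma$'s per-bulk path length $\ge N$. Because the snapshots at each bulk boundary coincide by invariant~(i), this comparison composes over the lifetime, so deferring all compression to a single Phase~II can only decrease, never increase, the total node-visit count relative to $\sigma$. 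Since $\sigma$ is a legal sequence of $O(m+q)$ union-by-rank-plus-path-compression operations, the classical analysis~\cite{T75,SeidelS05} bounds its total cost, and therefore the total work of all \procName{Bulk-Find}s---hence of all \procName{Bulk-Update} and \procName{Bulk-Query} calls---by $O((m+q)\,\alpha(m+q,n))$, establishing part~(3).
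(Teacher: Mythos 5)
Your proposal is correct, and it follows the paper's overall strategy---bound the cost of \procName{Bulk-Find} by exhibiting a sequential execution of path-compressing \find{}s and then invoking the classical $O((m+q)\alpha(m+q,n))$ amortized bound---but it replaces the paper's key lemma with a genuinely different argument. The paper (Lemma~\ref{lem:we-equiv}) constructs a dependency graph on query flows (a stopped flow $q$ depends on $\textsf{own}(r(q))$, the flow that carried the answer past its stopping point), takes a topological order $S'$, and shows that serial $U.\find$ in \emph{that specific order} traverses exactly the edges of $R_\cup$ and produces the identical forest: an exact, order-specific cost equivalence. You instead fix an essentially arbitrary serial order (finds before unions within each minibatch) and prove two things: state synchronization at minibatch boundaries---which works because within a pure find phase, full path compression makes the final forest order-independent (every node on the union of the pre-bulk find paths ends up pointing at its root, and sizes/ranks are unchanged by compression)---and a domination inequality, via the observation that the first \find{} to install a pointer bypassing a node $x$ must itself have traversed $x$, so any serial compressing execution visits every one of the $N$ distinct nodes that \procName{Bulk-Find} touches, giving $O(|R_\cup|) = O(N+|S|) \le O(\text{serial cost})$. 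Your route buys robustness: you need only an inequality and never have to define $\textsf{own}(\cdot)$, argue acyclicity of the dependency graph, or produce the topological order; the paper's route buys an exact two-sided correspondence (the parallel execution is neither cheaper nor costlier than the exhibited serial one), which is a sharper structural statement. Two minor mismatches, neither a gap: you use union by rank where the paper uses union by size (the paper explicitly notes either works), and your part~(2) argument---$O(\log n)$ rounds, each a constant number of the primitives of Section~\ref{sec:prelim}---actually yields $O(\polylog(n))$ depth rather than $O(\log n)$, but this mirrors the paper's own looseness, since its Bulk-Find lemma likewise proves only $O(\polylog(n))$.
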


\myparagraph{Overview:} All sequential data structures with a
$O((m+q)\alpha(n))$ bound use a technique called path compression, which
shortens the path that \find{} traverses on to reach the root, making subsequent
operations cheaper.  Our goal in this section is to enable path compression
during parallel execution.  We present a new parallel \find{} procedure called
\procName{Bulk-Find}, which answers a set of \find{} queries in parallel and
performs path compression.


To understand the benefits of path compression, consider a concrete example in
Figure~\annref[A]{fig:path-compress-illus}, which shows a union-find tree $T$
that is a typical in a union-find forest. The root of $T$ is $r = 19$.  Suppose
we need to support $\find$'s from $u = 1$ and $v = 7$.  When all is done, both
$\find(u)$ and $\find(v)$ should return $r$.  Notice that in this example, the
paths to the root $u \rightsquigarrow r$ and $v \rightsquigarrow r$ meet at
common vertex $w = 4$.  That is, the two paths are identical from $w$ onward to
$r$.  If \find's were done sequentially, say $\find(u)$ before $\find(v)$, then
$\find(u)$---with path compression---would update all nodes on the
$u \rightsquigarrow r$ path to point to $r$.  This means that when $\find(v)$
traverses the tree, the path to the root is significantly shorter: for
$\find(v)$, the next hop after $w$ is already $r$.


The kind of sharing and shortcutting illustrated, however, is not possible when
the \find{} operations are run independently in parallel.  Each \find{}, unaware
of the others, will proceed all the way to the root, missing out on possible
sharing.

We fix this problem by organizing the parallel computation so that the work on
different ``flows'' of \find{}s is carefully coordinated. 
Algorithm~\ref{algo:bulk-find} shows an algorithm \procName{Bulk-Find}, 
which works in two phases, separating actions that only
read from the tree from actions that only write to it:
\begin{description}[leftmargin=0pt]
\item[$\rhd$]\emph{Phase I:} Find the roots for all queries, coalescing flows as
  soon as they meet up.  This phase should be thought of as running
  breadth-first search (BFS), starting from all the query nodes $S$ at once.  As
  with normal BFS, if multiple flows meet up, only one will move on.  Also, if a
  flow encounters a node that has been traversed before, that flow no longer
  needs to go on.  To proceed to Phase II, we need to record the paths traversed
  so that we can distribute responses to the requesting nodes.

\item[$\rhd$]\emph{Phase II:} Distribute the answers and shorten the paths.
  Using the transcript from Phase I, Phase II makes sure that all nodes
  traversed will point to the corresponding root---and answers delivered to all
  the \find{}s.  This phase, too, should be thought of as running breadth-first
  search (BFS) backwards from all the roots reached in Phase I.  This BFS
  reverses the steps taken in Phase I using the trails recorded.  There is a
  technical challenge in implementing this.  Back in Phase I, to minimize the
  cost of recording these trails, the trails are kept as a list of directed
  edges (marked by their two endpoints) traversed.  However, for the reverse
  traversal in Phase II to be efficient, it needs a means to quickly look up all
  the neighbors of a vertex (i.e., at every node, we must be able to find every
  flow that arrived at this node back in Phase I). For this, we design a data
  structure that takes advantage of hashing and integer sorting
  (Theorem~\ref{thm:intsort}) to keep the parallel complexity low.  We discuss
  our solution to this problem in the section that follows
  (Lemma~\ref{lem:resp-dist}).
\end{description}

\myparagraph{Example:} We illustrate how the \procName{Bulk-Find} algorithm
works using the union-find from Figure~\annref[A]{fig:path-compress-illus}.  The
queries to the \procName{Bulk-Find} are nodes that are circled.  The paths
traversed in Phase I are shown in panel B.  If a flow is terminated, the last
edge traversed on that flow is rendered as
\begin{tikzpicture}
\draw[-|] (0,-0.25) -- (0.5,-0.25);
\end{tikzpicture}.%

Notice that as soon as flows meet up, only one of them will carry on.  In
general, if multiple flows meet up at a point, only one will go on.  Notice also
that both the flow $1 \to 2 \to 4$ and the flow $7 \to 8 \to 9 \to 4$ are
stopped at $4$ because $4$ is a source itself, which was started at the same
time as $1$ and $7$.  At the finish of Phase I, the graph (in fact a tree) given
by $R_\cup$ is shown in panel C.  Finally, in Phase II, this graph is traversed
and all nodes visited are updated to point to their corresponding root (as shown
in panel D).






\begin{figure*}[tbh]
  \includegraphics[width=\linewidth]{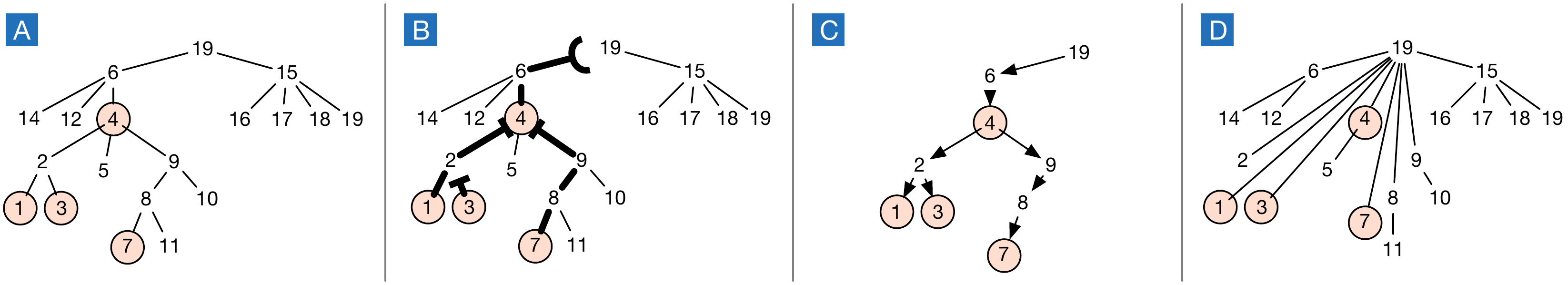}
  \caption{\textsf{A}: An example union-find tree with sample queries circled;
    \textsf{B}: Bolded edges are paths, together with their stopping points,
    that result from the traversal in Phase I; \textsf{C}: The traversal graph
    $R_\cup$ recorded as a result of Phase I; and \textsf{D}: The union-find
    tree after Phase II, which updates all traversed nodes to point to their
    roots.}
  \label{fig:path-compress-illus}
\end{figure*}

\subsection{Response Distributor}

Consider a sequence
$R_\cup = \langle (\textit{from}_i, \textit{to}_i)\rangle_{i=1}^\lambda$.
%
We need a data structure $\textit{RD}$ such that after some preprocessing of
$R_\cup$, can efficiently answer the query $\textit{RD}.\procName{allFrom}(f)$
which returns a sequence containing all $\textit{to}_i$ where
$\textit{from}_i = f$.

To meet the overall running time bound, the preprocessing step cannot take more
than $O(\lambda)$ work and $O(\polylog(\lambda))$ depth.  As far as we know, we
cannot afford to generate, say, a sequence of sequences $RD$ where $RD[f]$ is a
sequence containing all $\textit{to}_i$ such that $\textit{from}_i = f$.
Instead, we propose a data structure with the following properties:
\begin{lemma}[Response Distributor]
  \label{lem:resp-dist}
  There is a data structure response distributor (\textit{RD}) that from input
  $R_\cup = \langle (\textit{from}_i, \textit{to}_i)\rangle_{i=1}^\lambda$ can
  be constructed in $O(\lambda)$ work and $O(\polylog(n))$ depth.  Each
  \procName{allFrom} query can be answered in $O(\log \lambda)$
  depth. Furthermore, if $\mathbb{F}$ is the set of unique $\textit{from}_i$
  (i.e., $\mathbb{F} = \{ \textit{from}_i \,:\, i = 1, \dots, \lambda\}$), then
  \[\expct{\sum_{f \in F} \Work(\textit{RD}.\procName{allFrom}(f))} =
  O(\lambda).
  \]
\end{lemma}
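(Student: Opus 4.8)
The plan is to realize \textit{RD} by \emph{bucketing} the pairs of $R_\cup$ according to a hash of their $\textit{from}$ field, so that all pairs sharing a hash value land in one contiguous block, and then to answer \procName{allFrom}$(f)$ by scanning the single block that $f$ hashes to. Concretely, I would draw a hash function $h$ from a $2$-universal (pairwise independent) family mapping keys in $[n]$ into a table of size $\beta = \Theta(\lambda)$, and integer-sort the $\lambda$ pairs by the key $h(\textit{from}_i)$. Hashing does double duty here: the raw $\textit{from}$ values live in $[n]$, which may be far larger than $c\lambda$, so they cannot be fed to \procName{intSort} directly, whereas composing with $h$ compresses them into the range $[0,\beta) = [0,O(\lambda))$ required by Theorem~\ref{thm:intsort}. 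After sorting, the pairs with a common hash value form a contiguous run, and a single prefix-sum pass records, for each bucket $j$, the start and end indices of its run in a length-$\beta$ array. This preprocessing is one \procName{intSort} plus a constant number of prefix-sum/map passes, so it runs in $O(\lambda)$ work and polylogarithmic depth, as required.

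To answer \procName{allFrom}$(f)$, I would compute $h(f)$ in $O(1)$, look up the index range of bucket $h(f)$, and then parallel-filter that contiguous sub-array, keeping exactly the entries whose $\textit{from}$ equals $f$ and emitting their $\textit{to}$ fields. Locating the block is $O(1)$, and packing the matches out of a block of length at most $\lambda$ has depth $O(\log \lambda)$, giving the claimed query depth; its \emph{work}, however, is proportional to the entire length of the scanned block, i.e.\ to $B_{h(f)} := |\{\, i : h(\textit{from}_i) = h(f)\,\}|$, rather than merely to the number of exact matches. Correctness is immediate: every pair with $\textit{from}_i = f$ hashes to bucket $h(f)$ and survives the exact-match filter, while no pair with a different $\textit{from}$ value does.

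The heart of the argument is the expected total-work bound, and this is the step I expect to be the main obstacle, since a single unlucky bucket could in principle be as large as $\lambda$. Writing $\mathbb{F}$ for the set of distinct $\textit{from}$ values, the quantity to control is $\sum_{f \in \mathbb{F}} B_{h(f)}$, which rearranges by linearity into
\[
  \sum_{f \in \mathbb{F}} B_{h(f)} \;=\; \sum_{f \in \mathbb{F}} \bigl|\{\, i : h(\textit{from}_i) = h(f) \,\}\bigr| \;=\; \sum_{f\in\mathbb{F}} \sum_{i=1}^{\lambda} \mathbf{1}[\, h(\textit{from}_i) = h(f)\,].
\]
I would split each inner sum into the \emph{exact matches} ($\textit{from}_i = f$) and the \emph{collisions} ($\textit{from}_i \neq f$ yet $h(\textit{from}_i)=h(f)$). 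The exact matches contribute $\sum_{f}|\{ i : \textit{from}_i = f\}| = \lambda$ deterministically, since each of the $\lambda$ pairs is counted exactly once. For the collisions, pairwise independence of $h$ gives $\Pr[h(g)=h(f)] \le 1/\beta$ for every $g \neq f$, so by linearity of expectation the expected collision count is at most $|\mathbb{F}|\cdot \lambda / \beta \le \lambda^2/\beta$. Choosing $\beta = \Theta(\lambda)$ makes this $O(\lambda)$, and combining the two parts yields $\expct{\sum_{f\in\mathbb{F}} \Work(\textit{RD}.\procName{allFrom}(f))} = O(\lambda)$. The only remaining care is that the single choice $\beta = \Theta(\lambda)$ simultaneously keeps \procName{intSort} in range and keeps the expected bucket load $O(1)$; both hold under this choice, which closes the argument.
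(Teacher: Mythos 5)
Your proposal is correct and follows essentially the same route as the paper's proof: hash the $\textit{from}$ keys into a table of size $\Theta(\lambda)$ (the paper uses $\rho = 3\lambda$), apply \procName{intSort} on the hash values precisely because the raw keys in $[n]$ are out of range, record bucket boundaries with a prefix-sum pass, and answer \procName{allFrom}$(f)$ by filtering the contiguous bucket of $h(f)$. Your expected-work accounting (exact matches contribute $\lambda$ deterministically, collisions contribute $O(\lambda)$ in expectation by pairwise collision probability $1/\beta$) is the same calculation the paper performs via $\sum_{f} n_f\beta_f$ with $\expct{\beta_f} \leq 1 + |\mathbb{F}|/\rho$, merely reorganized by indicator variables.
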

\begin{proof}
  Let $h$ be a hash function from the domain of $\textit{from}_i$'s (a subset of
  $[n]$) to $[\rho]$, where $\rho = 3\lambda$.  To construct an \textit{RD}, we
  proceed as follows. Compute the hash for each $\textit{from}_i$ using
  $h(\cdot)$ and sort the ordered pairs $(\textit{from}_i, \textit{to}_i)$ by
  their hash values.  Call this sorted array $A$. After sorting, we know that
  pairs with the same hash value are stored consecutively in $A$. Now create an
  array $o$ of length $\rho+1$ so that $o_i$ marks the beginning of pairs whose
  hash value is $i$.  If none of them hash to $i$, $o_i = o_{i+1}$.  These steps
  can be done using \procName{intSort} and standard techniques in $O(\lambda)$
  work and $O(\polylog(\lambda))$ depth because the hash values range within
  $O(\lambda)$.

  To support $\procName{allFrom}(f)$, we compute $\kappa = h(f)$ and look in $A$
  between $o_{\kappa}$ and $o_{\kappa+1} - 1$, selecting only pairs whose
  $\textit{from}$ matches $f$.  This requires at most
  $O(\log |o_{\kappa+1} - o_{\kappa}|) = O(\log \lambda)$ depth.  The more
  involved question is how much work is needed to support \procName{allFrom}
  over all.  To answer this, consider all the pairs in $R_\cup$ with
  $\textit{from}_i = f$.  Let $n_f$ denote the number of such pairs.  These
  $n_f$ pairs will be gone through by queries looking for $f$ and other entries
  that happen to hash to the same value as $f$ does.  The exact number of times
  these pairs are gone through is
  $\beta_f := \#\{ s \in \mathbb{F} : h(f) = h(s)\}$.
  Hence, across all queries $ f \in \mathbb{F}$, the total work is
  $\sum_{f \in \mathbb{F}} n_f\beta_f$.  But
  $\expct{\beta_f} \leq 1 + \frac{|\mathbb{F}|}{\rho}$, so
  \[
  \sum_{f \in \mathbb{F}} \expct{n_f\beta_f} \leq \left(1 +
    \tfrac{|\mathbb{F}|}{\rho}\right)\sum_{f \in \mathbb{F}} n_f \leq
  (1+\tfrac{\lambda}{3\lambda})\lambda \leq 2\lambda
  \]
  because $|\mathbb{F}| \leq \lambda$ and
  $\sum_{f \in \mathbb{F}} n_f = \lambda$, completing the proof.
\end{proof}

With this lemma, the cost of $\procName{Bulk-Find}$ can be stated as follows.
\begin{lemma}
  $\procName{Bulk-Find}(U, S)$ does $O(|R_\cup|)$ work and has $O(\polylog(n))$
  depth.
\end{lemma}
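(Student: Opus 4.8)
The plan is to bound work and depth separately for the two phases, using that $\lambda = |R_\cup| = \sum_j |R_j|$, so any cost proportional to the total number of recorded traversal edges is $O(|R_\cup|)$. For the depth I would rely on the invariant that every union-find tree has height $O(\log n)$ (maintained by union by size, as in Lemma~\ref{lemma:seq-uf-cost}), which caps the number of \textbf{while}-loop iterations in both phases.

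First I would handle Phase~I. The crucial structural fact is that each vertex enters a frontier $F_i$ at most once: $\procName{mkFrontier}$ removes duplicates and excludes already-visited nodes, while $\textit{visited}$ only grows. Consequently $R_{i+1}$, obtained from $F_i$ by one parallel map-and-filter (advancing each live flow one parent-edge and dropping roots), contributes disjoint pairs, so the $R_j$ partition a collection of total size $|R_\cup|$. Each iteration costs work linear in $|F_i| + |R_{i+1}|$ for the map, filter, and $\procName{removeDup}$; summed over all iterations this is $O(|R_\cup|)$. For depth, every live flow advances exactly one hop toward its root per iteration and coalescing only kills flows, so the loop runs for at most $\max_{s \in S} \mathrm{depth}(s) = O(\log n)$ iterations, each of $O(\polylog(n))$ depth (dominated by $\procName{removeDup}$). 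Hence Phase~I is $O(|R_\cup|)$ work and $O(\polylog(n))$ depth.

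Next I would charge the response distributor and Phase~II. By Lemma~\ref{lem:resp-dist}, constructing $\textit{RD}$ from $R_\cup$ costs $O(|R_\cup|)$ work and $O(\polylog(n))$ depth. For Phase~II I would first argue that the recorded traversal graph $R_\cup$ is a forest: each non-root vertex is the ``to'' endpoint of exactly one edge (its unique parent-hop in Phase~I), so the backward BFS reaches each vertex in exactly one set $D_i$ and therefore issues $\procName{allFrom}(v)$ at most once per vertex. The set of $v$ on which $\procName{allFrom}$ is invoked is thus a subset of the unique ``from'' values $\mathbb{F}$, so the total work of all $\procName{allFrom}$ calls is at most $\sum_{f \in \mathbb{F}} \Work(\textit{RD}.\procName{allFrom}(f))$, which Lemma~\ref{lem:resp-dist} bounds by $O(|R_\cup|)$ in expectation. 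The parent writes and the formation of each $D_{i+1}$ add only $O(1)$ work per expanded pair, hence $O(|R_\cup|)$ overall. For depth, the number of $D_i$ iterations equals the depth of the traversal forest, i.e.\ the number of Phase~I iterations, again $O(\log n)$; each iteration performs parallel writes in $O(1)$ depth, an $\procName{allFrom}$ of $O(\log \lambda)$ depth, and a parallel union of the results in $O(\polylog(n))$ depth. Summing the two phases and the $\textit{RD}$ construction yields $O(|R_\cup|)$ expected work and $O(\polylog(n))$ depth.

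The main obstacle is the Phase~II work bound: naively, a single root can be the destination of many flows, so one must avoid re-traversing shared suffixes and avoid paying for hash collisions in $\textit{RD}$. Both issues are resolved by the two observations above---that each vertex is expanded exactly once (so the relevant $\procName{allFrom}$ queries range over distinct ``from'' values) and that Lemma~\ref{lem:resp-dist} already amortizes the collision cost into an $O(\lambda)$ expected total. The only other delicate point is justifying the $O(\log n)$ iteration count for both loops, which hinges on the union-by-size height invariant surviving the path-compression writes; since compression can only shorten paths, never lengthen them, the invariant is preserved.
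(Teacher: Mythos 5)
Your proof is correct and follows essentially the same route as the paper's: charge Phase~I per-iteration work to $|R_i|$ and bound iterations by the $O(\log n)$ union-by-size tree height, then charge Phase~II's expansion work to the response-distributor bound of Lemma~\ref{lem:resp-dist}. In fact you supply a detail the paper only asserts---that each vertex has in-degree at most one in $R_\cup$, so \procName{allFrom} is invoked at most once per vertex and the queried vertices lie in $\mathbb{F}$, which is exactly what makes the $O(\lambda)$ expected-work bound of the lemma applicable.
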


\begin{proof}
  The $R_i$'s, $F_i$'s, and $D_i$'s can be maintained directly as arrays.  The
  \textit{roots} and \textit{visited} sets can be maintained as as bit flags on
  top of the vertices of $U$ as all we need are setting the bits
  (adding/removing elements) and reading their values (membership testing).
  There are two phases in this algorithm. In Phase I, the cost of adding $F_i$
  to \textit{visited} in iteration $i$ is bounded by $|R_{i}|$.  Using standard
  parallel operations~\cite{JaJa:book92}, the work of the other steps is clearly
  bounded by $|R_{i+1}|$, including \procName{mkFrontier} because
  \procName{removeDup} does work linear in the input, which is bounded by
  $|R_{i+1}|$. Thus, the work of Phase I is at most
  $O(\sum_i |R_i|) = O(|R_\cup|)$. In terms of depth, because the union-find
  tree has depth at most $O(\log n)$, the \textbf{while} loop can proceed for at
  most $O(\log n)$ times.  Each iteration involves standard operations with
  depth at most $O(\log^2 n)$, so the depth of Phase I is at most $O(\log^3 n)$.

  In Phase II, the dominant cost comes from expanding $D_i$ into $D_{i+1}$ by
  calling $\textit{RD}.\procName{allFrom}$. By Lemma~\ref{lem:resp-dist}, across
  all iterations, the work caused by $\textit{RD}.\procName{allFrom}$, run on
  each vertex once, is expected $O(|R_\cup|)$, and the depth is
  $O(\polylog(|R_\cup|)) \leq O(\polylog(|R_\cup|))$.  Overall, the algorithm
  requires $O(|R_\cup|)$ work and $O(\polylog(n))$ depth.
\end{proof}

\subsection{\procName{Bulk-Find}'s Cost Equivalence to Serial \find{}}
In analyzing the work bound of the improved data structure, we will show that
what \procName{Bulk-Find} does is equivalent to some sequential execution of the
standard \find{} and requires the same amount of work, up to constants.



To gather intuition, we will manually derive such a sequence for the sample
queries $S = \{1, 3, 4, 7\}$ used in Figure~\ref{fig:path-compress-illus}.  The
query of $4$ went all the way to the root without merging with another flow.
But the queries of $1$ and $7$ were stopped at $4$ and in this sense, depended
upon the response from the query of $4$. By the same reasoning, because the
query of $3$ merged with the query of $1$ (with $1$ proceeding on), the query of
$3$ depended on the response from the query of $1$.  Note that in this view,
although the query of $3$ technically waited for the response at $2$, it was the
query of $1$ that brought the response, so it depended on $1$.  To derive a
sequence execution, we need to respect the ``depended on'' relation: if $a$
depended on $b$, then $a$ will be invoked after $b$.  As an example, one
sequential execution order that respects these dependencies is
$\find(4), \find(7), \find(1), \find(3)$.

We can check that by applying \find{}s in this order, the paths traversed are
exactly what the parallel execution does as $U.\find{}$ performs full path
compression.

We formalize this idea in the following lemma: 
\begin{lemma}
  \label{lem:we-equiv}
  For a sequence of queries $S$ with which $\procName{Bulk-Find}(U, S)$ is
  invoked, there is a sequence $S'$ that is a permutation $S$ such that applying
  $U.\find$ to $S'$ serially in that order yields the same union-find forest as
  \procName{Bulk-Find}'s and incurs the same traversal cost of $O(|R_\cup|)$,
  where $R_\cup$ is as defined in the \procName{Bulk-Find} algorithm.
\end{lemma}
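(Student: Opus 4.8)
The plan is to extract from the transcript $R_\cup$ a static ``traversal forest'' and then take the permutation $S'$ to be any linear extension of the ancestor order that this forest induces on the query nodes. First I would set up the forest $\mathcal{T}$: its directed edges are exactly the pairs $(\parent[v], v)$ recorded in $R_1 \oplus \cdots \oplus R_i$, and its node set $W$ is the set of all vertices that ever enter a frontier $F_j$ (this includes the \textit{roots}). I would record two structural facts that follow directly from the Phase~I loop: (i) each vertex enters a frontier at most once, since \textit{visited} and \procName{removeDup} enforce this, so every non-root node of $W$ contributes exactly one recorded edge and $\mathcal{T}$ is genuinely a forest whose roots are the collected \textit{roots}; and (ii) $W$ is upward closed under $\parent$, so the $\mathcal{T}$-path $P_s$ from any query $s$ to its root coincides with the true union-find path, and $\bigcup_{s\in S}P_s = W$. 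From (i)--(ii) it is immediate that Phase~II writes $\parent[w]\gets(\text{root of }w)$ for precisely the non-root $w\in W$, which pins down the target forest I must reproduce.

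Next I would define $S'$. Let $\preceq$ be the partial order on the query vertices given by ``ancestor in $\mathcal{T}$'', and let $S'$ be any topological linearization of $\preceq$ (ancestors before descendants); such a linearization exists because $\preceq$ is a sub-order of a tree-ancestor order and hence acyclic. For the running example this forces $\find(4)$ first and leaves $1,3,7$ in any order, consistent with the hand-derived sequence. I would then run serial $U.\find$ with full path compression in the order $S'$ and prove one invariant: each non-root $w\in W$ is \emph{freshly} compressed (moved from pointing at its $\mathcal{T}$-parent to pointing at its root) exactly once. The crux is to look, for a fixed $w$, at the first query $s^\star$ in $S'$ lying in the $\mathcal{T}$-subtree rooted at $w$. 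Because $S'$ respects $\preceq$, no earlier \find{} can have touched any node on the segment of $P_{s^\star}$ from $s^\star$ up to and including $w$; hence that whole segment is still fresh when $\find(s^\star)$ runs, so $\find(s^\star)$ traverses it contiguously and compresses $w$ (and everything below it) to the root. Conversely a compressed node already points at the root, so it is never freshly compressed again. This simultaneously shows that the serial run ends with exactly the non-root nodes of $W$ pointing at their roots, i.e.\ the same forest as \procName{Bulk-Find}.

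For the cost I would charge traversals against this ``compressed exactly once'' accounting. When $\find(s_j)$ runs, it walks the fresh prefix of $P_{s_j}$ (these nodes become compressed now and are charged one unit each), then hits at most one already-compressed node which points straight to the root, contributing $O(1)$ overhead per query. Summing, the total number of pointer hops is $O((|W|-\#\textit{roots}) + |S|)$. Using the edge count $|R_\cup| = |S| + (|W|-\#\textit{roots})$ together with $\#\textit{roots}\le|S|$ (the number of distinct trees touched is at most the number of queries), this is $O(|R_\cup|)$, matching the Phase~I traversal cost.

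I expect the main obstacle to be carrying out the freshness argument \emph{without} tracking individual \find{} ``flows'' through the merges of Phase~I. The intuitive ``depended-on'' relation from the discussion is ambiguous once several flows coalesce at a vertex, so the clean move is to discard flow identities entirely and reason purely from the static forest $\mathcal{T}$ and the ancestor order: the order-respecting property of $S'$ is exactly what guarantees that the segment below $w$ is untouched when $s^\star$ arrives. Handling repeated vertices in $S$ (duplicate $R_0$ entries) is a minor wrinkle, since a repeated query meets an already-compressed node and costs $O(1)$, which the $|R_0|=|S|$ term already pays for.
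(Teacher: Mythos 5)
Your proof is correct, and it shares the paper's overall strategy---serialize the \find{}s via a topological order extracted from the Phase~I transcript---but the relation you order by is genuinely different, and your accounting is different too. The paper tracks \emph{flow identities}: for each nonroot $u$ appearing in $R_\cup$, exactly one query flow $\textsf{own}(u)$ proceeds upward from $u$ (as decided by \procName{removeDup}); a query $q$ stopped at its furthest point $r(q)$ receives a dependency arc $\textsf{own}(r(q)) \to q$, and $S'$ is a topological order of this dependency graph. The paper then asserts that each serial $U.\find(q)$ retraces exactly its own Phase~I flow segment plus one hop, giving an edge-for-edge correspondence with $R_\cup$. You discard flow identities and order by the ancestor relation of the static traversal forest $\mathcal{T}$; the two orders genuinely differ---in the running example the paper's arc $1 \to 3$ forces $\find(1)$ before $\find(3)$, while your order permits $\find(3)$ first, in which case $\find(3)$ walks two fresh nodes ($3$ and $2$) although the flow of $3$ recorded only one edge. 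So under your $S'$ the serial run need not replicate the flows path-for-path, but your ``compressed exactly once'' charging still gives total $\Theta(|R_\cup|)$ and the identical final forest, which is all the lemma and the work bound of Theorem~\ref{thm:workeff-algo} actually use. What each approach buys: the paper gets the stronger per-query statement (each $\find$ costs exactly its flow length plus one); yours is canonical (independent of \procName{removeDup}'s arbitrary winner choices) and, importantly, explicitly discharges the freshness step the paper leaves implicit---that no earlier $\find$ disturbs the segment from $s^\star$ up through $w$. Do spell out the one fact this rests on: nodes touched by a serial $\find(q)$, even after earlier compressions, are always original ancestors of $q$ (compression only redirects pointers to roots, so the current path from $q$ is a subsequence of the original path); hence any $\find$ touching that segment would originate inside $w$'s subtree, contradicting the choice of $s^\star$. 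A curious byproduct of your accounting, worth noting: since every node of $W$ is compressed at most once and each $\find$ pays $O(1)$ beyond its fresh prefix (by the invariant that a compressed node has all its original ancestors compressed), in fact \emph{every} permutation of $S$ yields the same final forest and $\Theta(|R_\cup|)$ total cost---the ordering machinery, in either your form or the paper's, is needed only if one wants the paper's exact path-replication claim.
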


\begin{proof}
  For this analysis, we will associate every
  $(\textit{parent}, \textit{child}) \in R_\cup$ with a query $q \in S$.
  Logically, every query $q \in S$ starts a flow at $q$ ascending up the
  tree. If there are multiple flows reaching the same node, \procName{removeDup}
  inside \procName{mkFrontier} decides which flow to go on.  From this view, for
  any nonroot node $u$ appearing in $R_\cup$, there is \emph{exactly} one query
  flow from this node that proceeds up the tree.  We will denote this flow by
  $\textsf{own}(u)$.
  
  If a query flow is stopped partway (without reaching the corresponding root),
  the reason is either it merges in with another flow (via
  \procName{mkFrontier}) or it recognizes another flow that visited where it is
  going before (via \textit{visited}). For every query $q$ that is stopped partway,
  let $r(q)$ be the furthest point in the tree it has advanced to, i.e., $r(q)$
  is the endpoint of the maximal path in $R_\cup$ for the query flow $q$.

  \newcommand{\depg}[0]{\ensuremath{G_{\textit{dep}}}}
  In this set up, a query flow whose furthest point is $u$ will depend on the
  response from the query $\textsf{own}(u)$.  Therefore, we form a dependency
  graph $G_{\textit{dep}}$ (``$u$ depends on $v$'') as follows. The vertices are
  all the vertices from $S$.  For every query flow $q$ that is stopped partway,
  there is an arc $\textsf{own}(r(q)) \to q$.

  Let $S'$ be a topologically-ordered sequence of $\depg$. Multiple copies of
  the same query vertex can simply be placed next to each other.  If we apply
  $U.\find{}$ serially on $S'$, then all queries that a query vertex $q$ depends
  on in $\depg$ will have been called prior to $U.\find(q)$. Because of full
  path compression, this means that $U.\find(q)$ will follow
  $u \rightsquigarrow r(q) \to t$ ($r(q) \to t$ is one step), where $t$ is the
  root of the tree.  Hence, every $U.\find(q)$ traverses the same number of
  edges as $u \rightsquigarrow r(q)$ plus $1$. As every $R_\cup$ edge is part of
  a query flow, we conclude that the work of running $U.\find$ on $S'$ in that
  order is $O(|R_\cup|)$.
  \let\depg\undefined
\end{proof}

Finally, to obtain the bounds in Theorem~\ref{thm:workeff-algo}, we modify
\procName{Simple-Bulk-Query} and \procName{Simple-Bulk-Update} (in the
relabeling step) to use \procName{Bulk-Find} on all query pairs.  The depth
clearly remains $O(\polylog(n))$ per bulk operation. Aggregating the cost of
\procName{Bulk-Find} across calls from \procName{Bulk-Update} and
\procName{Bulk-Query}, we know from Lemma~\ref{lem:we-equiv} that there is a
sequential order that has the same work.  Therefore, the total work is bounded
by $O((m+q)\alpha(m+q, n))$.

\section{Implementation and Evaluation}
\label{sec:eval}

This section discusses an implementation of the proposed data structure and its
empirical performance.

\subsection{Implementation} 
With an eye toward a simple implementation that delivers good practical
performance, we set out to implement the simple bulk-parallel data structure
from Section~\ref{sec:simple-algo}.  The underlying union-find data structure
$U$ maintains two arrays of length $n$---\parent{}~and~\procName{sizes}---one
storing a parent pointer for each vertex, and the other tracking the sizes of
the trees.  The \find{} and \union{} operations follow a standard textbook
implementation. On top of these operations, we implemented
\procName{Simple-Bulk-Query} and \procName{Simple-Bulk-Update} as described
earlier in the paper.  We use standard sequence manipulation operations
(e.g., filter, prefix sum, pack, remove duplicate) from the PBBS
library~\cite{ShunBFGKST:spaa12}. There are two modifications that we made to
improve practical performance of the implementation:
\begin{description}[leftmargin=0pt,parsep=0pt]
\item[]\emph{Path Compression:} We wanted some benefits of path compression but without
  the full complexity of the work-efficient parallel algorithm from Section~\ref{sec:work-efficient}, 
  to keep the code simple.  We settled with the following pragmatic solution: The \find{}
  operations inside \procName{Simple-Bulk-Query} and \procName{Simple-Bulk-Update} still
  run independently in parallel.  But after finding the root, each operation
  traverses the tree one more time to update all the nodes on the path to point
  to the root. This leads to shorter paths for later bulk operations with clear
  performance benefits.  However, for large bulk sizes, the approach may still perform
  significantly more work than the work-efficient solution because the path compression
  from a \find{} operation may not benefit other \find{} operations within the same minibatch.
\item[]\emph{Connected Components:} The algorithm as described uses as a
  subroutine a linear-work parallel algorithm to find connected
  components. These linear work algorithms expect a graph representation that
  gives quick random access to the neighbors of a vertex.  We found the
  processing cost to meet this requirement to be very high and instead
  implemented the algorithm for connectivity described in
  Blelloch~et~al.~\cite{BlellochFGS:ppopp12}. Although this has worse
  theoretical guarantees, it can work with a sequence of edges directly and
  delivers good real-world performance.
\end{description}


\subsection{Experimental Setup}
\myparagraph{Environment:}
We performed experiments on an Amazon EC2 instance with $20$ cores (allowing for
$40$ threads via hyperthreading) of $2.4$ GHz Intel Xeon E5-2676 v3 processors,
running Linux 3.11.0-19 (Ubuntu 14.04.3).  We believe this represents a baseline
configuration of midrange workstations available in a modern cluster.
All programs were compiled with Clang version 3.4 using the flag \texttt{-O3}.
This version of Clang has the Intel Cilk runtime, which implements a
work-stealing scheduler known to impose only a small overhead on both parallel
and sequential code.
We report wall-clock time measured using
\verb@std::chrono::high_resolution_clock@.

For robustness, we perform \emph{three} trials and report the median running
time. Although there is randomness involved in the connected component (CC)
algorithm, we found no significant fluctuations in the running time across
runs.

\myparagraph{Datasets:} Our study aims to study the behavior of the algorithm on
a variety of graph streams.  To this end, we use a collection of synthetic graph
streams created using well-accepted generators.  We include both power-law-type
graphs and more regular graphs in the experiments.  These are graphs commonly
used in dynamic/streaming graph experiments~(e.g., \cite{MGB13}). A summary of
these datasets appear in
Table~\ref{tbl:datastat}.  

\begin{table}[t]
\centering
\begin{tabular}{l r r p{1.75in}}
  \toprule
  Graph & \#Vertices & \#Edges & Notes\\
  \midrule
  3Dgrid  &  $99.9$M & $300$M & {\footnotesize 3-d mesh}\\ 
  random  &  $100$M& $500$M & {\footnotesize $5$ randomly-chosen neighbors per node} \\
  local5  &  $100$M& $500$M & {\footnotesize small separators, avg. degree $5$}\\
  local16 &  $100$M& $1.6$B & {\footnotesize small separators, avg. degree $16$}\\
  rMat5   & $134$M& $500$M & {\footnotesize power-law graph using rMat~\cite{ChakrabartiZF:sdm04}}\\
  rMat16  & $134$M& $1.6$B & {\footnotesize a denser rMat graph}\\
  \bottomrule
\end{tabular}
\caption{Characteristics of the graph streams used in our experiments,
  showing for every dataset, the total number of nodes ($n$), 
  the total number of edges ($m$), and a brief description.}
\label{tbl:datastat}
\end{table}


The graph streams in our experiments differ substantially in how quickly they
become connected.  This input characteristic influences the data structure's
performance.  In Figure~\ref{fig:numconn}, we show for each graph stream, the
number of connected components at different points in the stream. The local16
graph becomes fully connected right around the midpoint of the stream.  Both
rMat5 and rMat16 continue to have tens of millions of components after consuming
the whole stream.  Note that in this figure, random and local5 are almost
visually indistinguishable until the very end.

\begin{figure}
  \centering
  \includegraphics[width=0.5\linewidth]{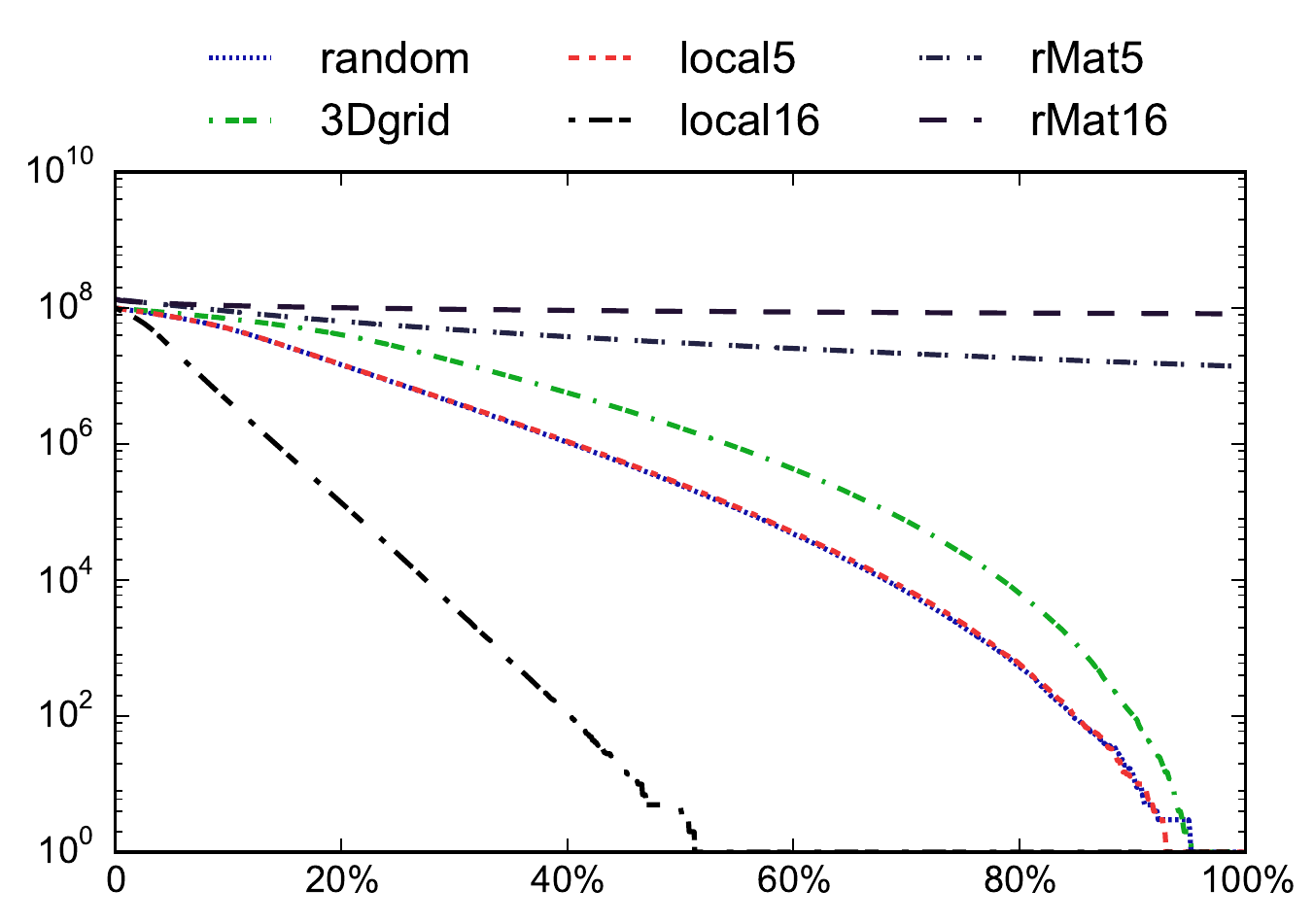}
  \caption{The numbers of connected components for each graph dataset at
    different percentages of the total graph stream processed.}
  \label{fig:numconn}
\end{figure}

\myparagraph{Baseline:} We directly compare our algorithms with union find
(denoted \procName{UF}), using both the union by size and a path compression
variant, which has the optimal sequential running time.  Most prior algorithms
either focus on parallel graphs or streaming graphs, not parallel streaming
graphs.  We note that the algorithm of McColl et~al.~\cite{MGB13} that works in
the parallel dynamic setting is not directly comparable to ours. Their algorithm
focuses on supporting insertion and deletion of arbitrary edges, whereas ours is
designed to take advantage of the insert-only setting.


\subsection{Results}
\label{sec:exp-results}

\emph{How does the bulk-parallel data structure perform on a multicore machine?}
To this end, we investigate the parallel overhead, speedup, and scalability
trend.

Table~\ref{table:overhead} shows the timings for the baseline sequential
implementation of union-find \procName{UF} with and without path compression and
the bulk-parallel implementation {\em on a single thread} for four different
batch sizes, 500K, 1M, 5M, and 10M.  To measure overhead, we first compare our
implementation to union find \emph{without} path compression: our implementation
is between $1.01$x and $2.5$x slower except on local16, in which the bulk
parallel achieves some speedups even on one thread.  This is mainly because the
number of connected components in local16 drops quickly to 1 as soon as
midstream (Figure~\ref{fig:numconn}).  With only 1 connected component, there is
little work for bulk-parallel to be done after that.  Compared to union find
with path compression, our implementation, which does pragmatic path
compression, shows nontrivial---but still acceptable---overhead, as to be
expected because our solution does not fully benefit \find{}s within the same
minibatch.



\begin{table}[t]
\centering
\begin{tabular}{lrrrrrr}
\toprule
  \multirow{2}{*}{\emph{Graph}}
  &\procName{UF}& \procName{UF}& \multicolumn{4}{c}{Bulk-Parallel Using Batch Size}\\
\cmidrule{4-7}
  &(no p.c.) & (p.c.) & 500K & 1M & 5M & 10M \\
\midrule
random & 44.63 & 18.42 & 65.43 & 66.57 & 75.20 & 77.89\\
3Dgrid & 30.26 & 14.37 & 61.10 & 62.00 & 71.74 & 75.07\\
local5 & 44.94 & 18.51 & 65.84 & 66.77 & 75.33 & 78.23\\
local16 & 154.40 & 46.12 & 114.34 & 108.92 & 114.80 & 117.55\\
rMat5 & 33.39 & 18.47 & 66.98 & 68.48 & 74.97 & 78.69\\
rMat16 & 81.74 & 35.29 & 83.27 & 76.64 & 76.03 & 77.62\\

\bottomrule
\end{tabular}
\caption{Running times (\textbf{in seconds}) on $1$ thread of the baseline union-find 
  implementation \procName{UF} with and without path compression  (unaffected by the batch size) and the bulk-parallel data structure as the batch size is varied.}
\label{table:overhead}
\end{table}

Table~\ref{table:speedup} shows the average throughputs (million edges/second)
of \procName{Bulk-Update} for different batch sizes. Here $T_1$ denotes the
throughput on $1$ thread and $T_{20c}$ the throughput on $20$ cores (40
hyper-threads). We also show the speedup as measured by $T_{20c}/T_1$.  We
observe consistent speedup on all six datasets under all four batch sizes.
Across all datasets, the general trend is that the larger the batch size, the
higher was the speedup. This is to be expected, since a larger batch size means
more work per core in processing each batch, and lesser overhead of
synchronization.

\begin{center}
\begin{table}[t]
{\small
\begin{tabular}{lrrrrrrrrrrrr}
  \toprule
  \multirow{2}{*}{\emph{Graph}}
  &\multicolumn{3}{c}{Using $b=500$K}
  &\multicolumn{3}{c}{Using $b=1$M}
  &\multicolumn{3}{c}{Using $b=5$M}
  &\multicolumn{3}{c}{Using $b=10$M}\\
  \cmidrule(r){2-4}
  \cmidrule(r){5-7}
  \cmidrule(r){8-10}
  \cmidrule(r){11-13}
  & $T_1$ & $T_{20c}$ & $\sfrac{T_{20c}}{T_1}$ 
  & $T_1$ & $T_{20c}$ & $\sfrac{T_{20c}}{T_1}$ 
  & $T_1$ & $T_{20c}$ & $\sfrac{T_{20c}}{T_1}$ 
  & $T_1$ & $T_{20c}$ & $\sfrac{T_{20c}}{T_1}$ \\
  \midrule
  random & 7.64 & 36.87 & $4.8$x & 7.51 & 46.02 & $6.1$x & 6.65 & 60.66 & $9.1$x & 6.42 & 63.90 & $10.0$x\\
3Dgrid & 4.91 & 27.97 & $5.7$x & 4.83 & 34.97 & $7.2$x & 4.18 & 44.27 & $10.6$x & 3.99 & 45.24 & $11.3$x\\
local5 & 7.59 & 38.41 & $5.1$x & 7.49 & 48.32 & $6.5$x & 6.64 & 64.61 & $9.7$x & 6.39 & 64.09 & $10.0$x\\
local16 & 13.99 & 78.83 & $5.6$x & 14.69 & 95.57 & $6.5$x & 13.94 & 122.69 & $8.8$x & 13.61 & 122.03 & $9.0$x\\
rMat5 & 7.47 & 26.08 & $3.5$x & 7.30 & 34.19 & $4.7$x & 6.67 & 49.92 & $7.5$x & 6.35 & 50.37 & $7.9$x\\
rMat16 & 19.21 & 54.94 & $2.9$x & 20.88 & 78.10 & $3.7$x & 21.05 & 143.63 & $6.8$x & 20.61 & 167.68 & $8.1$x\\

  \bottomrule
\end{tabular}
\caption{Average throughput (\textbf{in million edges/second}) and speedup of \procName{Bulk-Update} for different batch sizes $b$, where $T_1$ is throughput on $1$ thread and $T_{20c}$ is the throughput on $20$ cores.
}
\label{table:speedup}
}
\end{table}
\end{center}


Figure~\ref{fig:scalability} shows the average throughput (edges/sec) as the
number of threads increases from 1 to $20c$, which represents 40 hyperthreads.
Three different batch sizes were used for the experiments: 1M, 5M and 10M.  The
top chart represents the results on the random dataset, the middle chart on the
local16 dataset and the bottom chart on the rMat16 dataset.  In general, as the
number of threads increases, the average throughput increases for all 3 datasets
under different batch sizes. With a 10M batch size on 20 cores, we observe
speedups between $8$--$11$x. On the rMat16 dataset (the bottom chart), the
throughput starts to drop with batch size of 1M when the number of threads
increases beyond 20.  This is due to the relatively large number of connected
components in the rMat16 dataset from the beginning towards the end of
processing the entire dataset(see Figure~\ref{fig:numconn}).  In this case, the
work done per batch of input edges is relatively small, and a 1M batch size is
too small for the rMat16 dataset to realize additional parallelization benefits
beyond 20 threads.


\begin{figure}
\centering
\includegraphics[width=0.5\textwidth]{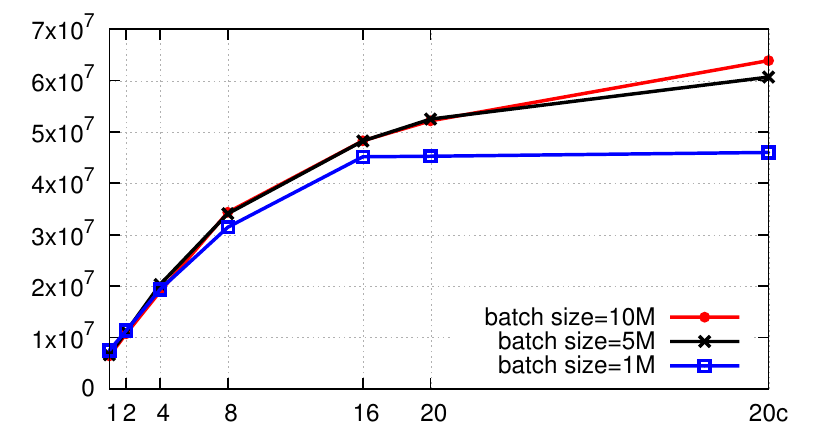}
\includegraphics[width=0.5\textwidth]{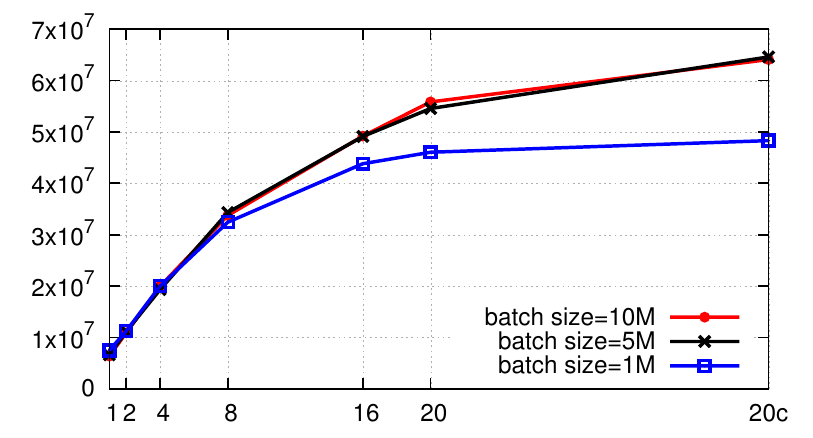}
\includegraphics[width=0.5\textwidth]{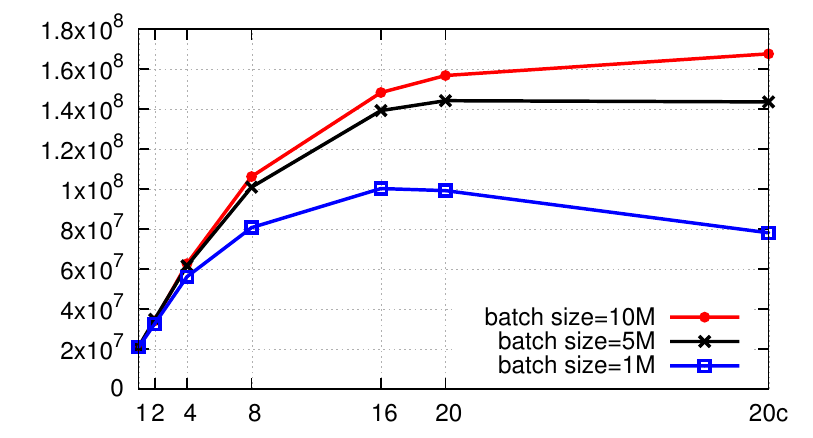}
\caption{Average throughput (\textbf{edges per second}) as the number of threads is
  varied from 1 to $40$ (denoted by $20c$ as they run on 20
  cores with hyperthreading).  The graph streams shown are (top) random,
  (middle) local16, and (bottom) rMat16.}
\label{fig:scalability}
\end{figure}


\section{Conclusion}
\label{sec:concl}

We presented a shared-memory parallel algorithm for incremental graph connectivity in the minibatch arrival model.  Our algorithm has polylogarithmic parallel depth and its total work across all processors is of the same order as the work due to the best sequential algorithm for incremental graph connectivity. We also presented a simpler parallel algorithm that is easier to implement and has good practical performance.

This presents several natural open research questions. We list some of them here. (1)~In case all edge updates are in a single minibatch, the total work of our algorithm is (in a theoretical sense), superlinear in the number of edges in the graph. Whereas, the optimal batch algorithm for graph connectivity, based on a depth-first search, has work linear in the number of edges. Is it possible to have an incremental algorithm whose work is linear in the case of very large batches, such as the above, and falls back to the union-find type algorithms for smaller minibatches? Note that for all practical purposes, the work of our algorithm is linear in the number of edges, due to very slow growth of the inverse Ackerman's function. (2)~Can these results on parallel algorithms be extended to the fully dynamic case when there are both edge arrivals as well as deletions?





\begin{thebibliography}{10}

\bibitem{AW91}
Richard~J. Anderson and Heather Woll.
\newblock Wait-free parallel algorithms for the union-find problem.
\newblock In {\em Proceedings of the 23rd Annual {ACM} Symposium on Theory of
  Computing, May 5-8, 1991, New Orleans, Louisiana, {USA}}, pages 370--380,
  1991.

\bibitem{BO+2013}
Jonathan Berry, Matthew Oster, Cynthia~A. Phillips, Steven Plimpton, and
  Timothy~M. Shead.
\newblock Maintaining connected components for infinite graph streams.
\newblock In {\em Proc. 2nd International Workshop on Big Data, Streams and
  Heterogeneous Source Mining: Algorithms, Systems, Programming Models and
  Applications (BigMine)}, pages 95--102, 2013.

\bibitem{BlellochFGS:ppopp12}
Guy~E. Blelloch, Jeremy~T. Fineman, Phillip~B. Gibbons, and Julian Shun.
\newblock Internally deterministic parallel algorithms can be fast.
\newblock In {\em Proceedings of the 17th {ACM} {SIGPLAN} Symposium on
  Principles and Practice of Parallel Programming, {PPOPP} 2012, New Orleans,
  LA, USA, February 25-29, 2012}, pages 181--192, 2012.

\bibitem{ChakrabartiZF:sdm04}
Deepayan Chakrabarti, Yiping Zhan, and Christos Faloutsos.
\newblock {R-MAT:} {A} recursive model for graph mining.
\newblock In {\em Proceedings of the Fourth {SIAM} International Conference on
  Data Mining, Lake Buena Vista, Florida, USA, April 22-24, 2004}, pages
  442--446, 2004.

\bibitem{CormenLRS:book09}
Thomas~H. Cormen, Charles~E. Leiserson, Ronald~L. Rivest, and Clifford Stein.
\newblock {\em Introduction to Algorithms {(3.} ed.)}.
\newblock {MIT} Press, 2009.

\bibitem{DEMR10}
Camil Demetrescu, Bruno Escoffier, Gabriel Moruz, and Andrea Ribichini.
\newblock Adapting parallel algorithms to the {W}-stream model, with
  applications to graph problems.
\newblock {\em Theor. Comput. Sci.}, 411(44-46):3994--4004, 2010.

\bibitem{DFR09}
Camil Demetrescu, Irene Finocchi, and Andrea Ribichini.
\newblock Trading off space for passes in graph streaming problems.
\newblock {\em {ACM} Transactions on Algorithms}, 6(1), 2009.

\bibitem{FKMSZ05}
Joan Feigenbaum, Sampath Kannan, Andrew McGregor, Siddharth Suri, and Jian
  Zhang.
\newblock On graph problems in a semi-streaming model.
\newblock {\em Theor. Comput. Sci.}, 348(2-3):207--216, 2005.

\bibitem{Gazit91}
Hillel Gazit.
\newblock An optimal randomized parallel algorithm for finding connected
  components in a graph.
\newblock {\em {SIAM} J. Comput.}, 20(6):1046--1067, 1991.

\bibitem{IBM-streams}
{IBM Corporation}.
\newblock Infosphere streams.
\newblock \url{http://www-03.ibm.com/software/products/en/infosphere-streams}.
\newblock Accessed Jan 2014.

\bibitem{JaJa:book92}
Joseph J{\'a}J{\'a}.
\newblock {\em An Introduction to Parallel Algorithms}.
\newblock Addison-Wesley, 1992.

\bibitem{KKM13}
Bruce~M. Kapron, Valerie King, and Ben Mountjoy.
\newblock Dynamic graph connectivity in polylogarithmic worst case time.
\newblock In {\em Proceedings of the Twenty-Fourth Annual {ACM-SIAM} Symposium
  on Discrete Algorithms, {SODA} 2013, New Orleans, Louisiana, USA, January
  6-8, 2013}, pages 1131--1142, 2013.

\bibitem{MGB13}
Robert McColl, Oded Green, and David~A. Bader.
\newblock A new parallel algorithm for connected components in dynamic graphs.
\newblock In {\em 20th Annual International Conference on High Performance
  Computing, HiPC 2013, Bengaluru (Bangalore), Karnataka, India, December
  18-21, 2013}, pages 246--255, 2013.

\bibitem{RajasekaranR89}
Sanguthevar Rajasekaran and John~H. Reif.
\newblock Optimal and sublogarithmic time randomized parallel sorting
  algorithms.
\newblock {\em {SIAM} J. Comput.}, 18(3):594--607, 1989.

\bibitem{SeidelS05}
Raimund Seidel and Micha Sharir.
\newblock Top-down analysis of path compression.
\newblock {\em {SIAM} J. Comput.}, 34(3):515--525, 2005.

\bibitem{ShunBFGKST:spaa12}
Julian Shun, Guy~E. Blelloch, Jeremy~T. Fineman, Phillip~B. Gibbons, Aapo
  Kyrola, Harsha~Vardhan Simhadri, and Kanat Tangwongsan.
\newblock Brief announcement: the problem based benchmark suite.
\newblock In {\em 24th {ACM} Symposium on Parallelism in Algorithms and
  Architectures, {SPAA} '12, Pittsburgh, PA, USA, June 25-27, 2012}, pages
  68--70, 2012.

\bibitem{SDB14}
Julian Shun, Laxman Dhulipala, and Guy~E. Blelloch.
\newblock A simple and practical linear-work parallel algorithm for
  connectivity.
\newblock In {\em 26th {ACM} Symposium on Parallelism in Algorithms and
  Architectures, {SPAA} '14, Prague, Czech Republic}, pages 143--153, 2014.

\bibitem{T75}
Robert~Endre Tarjan.
\newblock Efficiency of a good but not linear set union algorithm.
\newblock {\em Journal of the ACM}, 22(2):215--225, 1975.

\bibitem{ZDL+13}
M.~Zaharia, T.~Das, H.~Li, T.~Hunter, S.~Shenker, and I.~Stoica.
\newblock Discretized streams: Fault-tolerant streaming computation at scale.
\newblock In {\em Proc. ACM Symposium on Operating Systems Principles (SOSP)},
  pages 423--438, 2013.

\end{thebibliography}

\end{document}